\title{Approximate Learning of Limit-Average Automata}
\author{Jakub Michaliszyn}{University of Wroc\l{}aw}{jmi@cs.uni.wroc.pl}{https://orcid.org/0000-0002-5053-0347}{} 
\author{Jan Otop}{University of Wroc\l{}aw}{jotop@cs.uni.wroc.pl}{https://orcid.org/0000-0002-8804-8011}{}
\authorrunning{Jakub Michaliszyn and Jan Otop}
\keywords{weighted automata, learning, expected value}
\newtheorem{problem}[theorem]{Problem}
\newcommand{\set}[1]{\{#1\}}
\newcommand{\aut}{\mathcal{A}}
\newcommand{\Q}{\mathbb{Q}}
\newcommand{\N}{\mathbb{N}}
\newcommand{\R}{\mathbb{R}}
\newcommand{\flimavg}{\textsc{LimAvg}}
\newcommand{\fsum}{\textsc{Sum}}
\newcommand{\favg}[1]{\textsc{Avg}(#1)}
\newcommand{\Paragraph}[1]{\noindent\textbf{#1.}}
\newcommand{\NP}{\textsc{NP}{}}
\newcommand{\prob}{\mathbb{P}}
\newcommand{\expected}{\mathbb{E}}
\newcommand{\valueL}[1]{\mathcal{L}_{{#1}}}
\newcommand{\cost}{{C}}
\newcommand{\markov}{\mathcal{M}}
\newcommand{\tuple}[1]{\langle #1 \rangle}
\newcommand{\lang}{\mathcal{L}}
\newcommand{\uniformN}[1]{\mathcal{U}^{#1}}
\newcommand{\uniformFin}[1]{\mathcal{G}(#1)}
\newcommand{\uniformInf}{\mathcal{U}^{\infty}}
\tikzset{
    position/.style args={#1:#2 from #3}{
        at=(#3.#1), anchor=#1+180, shift=(#1:#2)
    }
}
\newcommand{\satz}{SAT${}_{\mathrm{0}}$}
\newcommand{\lInfNorm}[1]{\left\lVert#1\right\rVert_{\infty}}
\newcounter{countitems}
\newcounter{nextitemizecount}
\newcommand{\setupcountitems}{
  \stepcounter{nextitemizecount}
  \setcounter{countitems}{0}
  \preto\item{\stepcounter{countitems}}
}
\newcommand{\computecountitems}{
  \edef\@currentlabel{\number\c@countitems}
  \label{countitems@\number\numexpr\value{nextitemizecount}-1\relax}
}
\newcommand{\nextitemizecount}{
  \getrefnumber{countitems@\number\c@nextitemizecount}
}
\newcommand{\previtemizecount}{
  \getrefnumber{countitems@\number\numexpr\value{nextitemizecount}-1\relax}
}
\newenvironment{AutoMultiColItemize}{
\vspace{-1.2em}
\ifnumcomp{\nextitemizecount}{>}{3}{\begin{multicols}{2}}{}
\setupcountitems\begin{itemize}}
{\end{itemize}
\unskip\computecountitems\ifnumcomp{\previtemizecount}{>}{3}{\end{multicols}}{}}
\begin{document}

\maketitle

\begin{abstract}
Limit-average automata are weighted automata on infinite words that use average to aggregate the weights seen in infinite runs.  
We study approximate learning problems for limit-average automata in two settings: passive and active. 
In the passive learning case, we show that limit-average automata are not PAC-learnable as samples must be of exponential-size to provide (with good probability) enough details to learn an automaton. 
We also show that the problem of finding an automaton that fits a given sample is \NP{}-complete.
In the active learning case, we show that limit-average automata can be learned almost-exactly, i.e., we can learn in polynomial time an automaton that is consistent with the target automaton on almost all words. 
On the other hand, we show that the problem of learning an automaton that approximates the target automaton (with perhaps fewer states) is \NP{}-complete.
The abovementioned results are shown for the uniform distribution on words. We briefly discuss learning over different distributions.
 \end{abstract}

\newcommand{\introPara}[1]{}
\section{Introduction}

 Quantitative verification has been proposed to verify non-Boolean system properties such as performance, energy consumption, fault-tolerance, etc.
There are two main challenges in applying quantitative verification in practice. 
First, formalization of quantitative properties is difficult as specifications are given directly as weighted automata, which extend finite automata with weights on transitions~\cite{Droste:2009:HWA:1667106,quantitativelanguages}. Quantitative logics, which could facilitate the specification task,
have either limited expression power or their model-checking problem is undecidable~\cite{BokerLTLCTL,avgLTL}. 
Second, there is little research on abstraction in the quantitative setting~\cite{CernyHR13}, which would allow us to reduce the size of quantitative models, presented as weighted automata.

We approach both problems using learning of weighted automata. We can apply the learning framework to facilitate writing quantitative specifications and make it more accessible to non-experts. 
For abstraction purposes, we study approximate learning, having in mind that approximation of a system can significantly reduce its size.

We focus on weighted automata over infinite words which compute the long-run average of weights~\cite{quantitativelanguages}.
Such automata express interesting quantitative properties~\cite{quantitativelanguages,henzingerotop17} and admit polynomial-time algorithms for basic decision questions~\cite{quantitativelanguages}.
Some of the interesting properties can be only expressed by non-deterministic automata~\cite{quantitativelanguages}, but every non-deterministic weighted automaton is approximated by some deterministic weighted automaton on almost all words~\cite{MichaliszynOtop18}.
This means that, by allowing some small margin of error, we can focus on deterministic automata 
while being able to model important system properties, such as minimal response time, minimal number of errors, the edit distance problem~\cite{henzingerotop17}, and the \emph{specification repair} framework from~\cite{DBLP:conf/lics/BenediktPR11}.

We therefore focus on approximate learning under probabilistic semantics, which corresponds to the average-case analysis in quantitative verification~\cite{ChatterjeeHO16}. 
We treat words as random events and functions defined by weighted automata as random variables.
In this setting, an automaton $\aut'$ $\epsilon$-approximates $\aut$ if the expected difference between $\aut$ and $\aut'$ (over all words) is bounded by $\epsilon$. 
We consider two learning approaches: passive and active.

In passive learning, we think of an automaton as a black-box. 
This might be a program, a specification, a working correct system to be replaced or abstracted, or even only some examples of how the automaton should work. 
Our goal is to construct a weighted automaton based on this black-box model by observing only inputs and outputs of the model.

In active learning, we assume presence of an interactive \emph{teacher} that
reveals the values of given examples 
and 
verifies whether a provided automaton is as intended; if not, the teacher responses with a witness, which is a word showing the difference between the constructed automaton and the intended automaton.
This does not necessary mean that the teacher is familiar with the weighted automata formalism --- to provide a witness, they may simply run the constructed automaton in parallel with the black-box and, 
if at any point their behaviors differ, provide the appropriate input to the learning algorithm.

\Paragraph{Our contributions}
We start with a discussion on the setting for learning problems. 
The first step is to find a suitable representation for samples; not all infinite words have finite representations.
A natural idea is to use ultimately periodic words in samples; we study such samples. 
However, the probability distribution over ultimately periodic words is very different from the uniform distribution over infinite words.
Therefore, we also consider samples consisting of a finite word $u$ labeled with the expected value over all extensions of $u$.
The probability distribution over such samples is closer to the uniform distribution over infinite words.

Then we study the passive learning problem. We show that for unique characterization of an automaton, we need a sample of exponential size. 
We study the complexity of the problem of finding an automaton that fits the whole sample. 
The problem, without additional restrictions, has a trivial and overfitting solution. 
To mitigate this, we impose bounds on automata size, and show that then the problem is \NP{}-complete.

For active learning, we show that the problem of learning an almost-exact automaton can be solved in polynomial time, and finding an automaton of bounded size that approximates the target one cannot be done in polynomial time if P $\neq$ \NP{}. We conclude with a discussion on different probability distributions.

\Paragraph{Related work}
The \emph{probably approximately correct} (PAC) learning, introduced in~\cite{valiant1984theory}, is a general passive learning framework applied to various objects (DNF/CNF formulas, decision trees, automata, etc.)~\cite{kearns1994introduction}.
PAC learning of deterministic finite automata (DFA) has been extensively studied despite negative indicators. First, the \emph{sample fitting} problem for DFA, where the task is to construct a minimal-size 
DFA consistent with a given sample, has been shown \NP{}-complete~\cite{gold1978complexity}. Even approximate sample fitting, where we ask for a DFA at most polynomially greater than a miniaml-size DFA, remains \NP-complete~\cite{pitt1993minimum}.
Second, it has been shown that existence of a polynomial-time PAC learning algorithm for DFA would break certain cryptographic systems (such as RSA) and hence it is unlikely~\cite{kearns1994cryptographic}.
Despite these negative results, it has been empirically shown that DFA can be efficiently learned~\cite{Lang92}. In particular, if we assume \emph{structural completeness} of a sample, then 
it determines a minimal DFA~\cite{oncina1992inferring}.
Pitt posed a question whether DFA are PAC-learnable under the uniform distribution~\cite{pitt1989inductive}, which remains open~\cite{Angluin15}.

Angluin showed that DFA can be learned in polynomial time, if the learning algorithm can ask membership and equivalence queries~\cite{angluin1987learning}.
This approach proved to be very fruitful and versatile. Angluin's algorithm has been adapted to 
learn NFA~\cite{BolligHKL09}, automata over infinite words~\cite{AngluinF16,Fisman18}, nominal automata over infinite alphabets~\cite{MoermanS0KS17}, weighted automata over words~\cite{activeWeightedAutomata} and weighted automata over trees~\cite{HabrardO06,MarusicW15}.

Recently, there has been a renewed interest in learning weighted automata~\cite{BallePP15,MarusicW15,BalleM15a,BalleM15b,BalleM18}. 
These results apply to weighted automata over fields~\cite{Droste:2009:HWA:1667106}, which work over finite words. 
We, however, consider limit-average automata, which work over infinite words and cannot be defined using a field or even a semiring.
Furthermore, we consider weighted (limit-average) automata under probabilistic semantics~\cite{ChatterjeeHO16,MichaliszynOtop18}, i.e., we consider functions represented by automata 
as random variables.

\section{The setting}

\label{sec:samples}
Given a finite alphabet $\Sigma$ of letters, a \emph{word} $w$ is a finite or infinite sequence 
of letters.
We denote the set of all finite words over $\Sigma$ by $\Sigma^*$, and the set of all infinite words over $\Sigma$ by $\Sigma^\omega$.
For a word $w$, we define $w[i]$ as the $i$-th letter of $w$, and we define $w[i,j]$ as the subword $w[i] w[i+1] \ldots w[j]$ of $w$. 
We use the same notation for vectors and sequences; we assume that sequences start with $0$ index.

A \emph{deterministic finite automaton} (DFA) is a tuple $(\Sigma, Q, q_0, F, \delta)$
consisting of
	the alphabet $\Sigma$, 
	a finite set of states $Q$, 
	the initial state $q_0 \in Q$,  
	a set of final states $F$, and
    a transition function 	$\delta \colon Q \times \Sigma \to Q$.

A (deterministic) $\flimavg$-automaton extends a DFA 
with a function $\cost \colon \delta \to \mathbb{Q}$  that defines rational \emph{weights} of transitions. 
The size of such an automaton $\aut$, denoted by $|\aut|$,
 is the sum of the number of states of $\aut$ and the lengths of binary encodings of all the weights.

A \emph{run} $\pi$  of a $\flimavg$-automaton $\aut$ on a word $w$ is a sequence of states $\pi[0] \pi[1] \dots$ such that $\pi[0]$ is the initial state and for every $i>0$ we have $\delta(\pi[i-1],w[i]) = \pi[i]$.
We do not consider $\omega$-accepting conditions and assume that all infinite runs are accepting.
Every run $\pi$ of $\aut$ on an infinite word $w$ defines a sequence of weights $\cost(\pi)$
of successive transitions of $\aut$, 
i.e., $\cost(\pi)[i] = \cost(\pi[i-1], w[i], \pi[i])$. 
The \emph{value} of the run $\pi$ is then defined as 
\(\flimavg(\pi) = \limsup_{k \rightarrow \infty} \favg{\cost(\pi)[0, k]} \),
where for finite runs $\pi$ we have \(\favg{\cost(\pi)}={\fsum(\cost(\pi))}/{|\cost(\pi)|}\).
The value of a word $w$ assigned by the automaton $\aut$, denoted by $\valueL{\aut}(w)$,
is the value of the run of $\aut$ on $w$.

We consider three classes of probability measures on words over the alphabet $\Sigma$. 
\begin{itemize}
\item $\uniformN{n}$, for $n \in \N$, is the uniform probability distribution on $\Sigma^n$ assigning each word the probability $|\Sigma|^{-n}$.
\item $\uniformFin{\lambda}$, for a termination probability $\lambda \in \Q^+ \cap (0,1)$, is such that for $u \in \Sigma^*$ we have
$\uniformFin{\lambda}(u) = |\Sigma|^{-|u|} \cdot (1-\lambda)^{|u|} \cdot \lambda$. Observe that the probability of words of the same length is equal, and the probability of
generating a word of length $k$ is $(1-\lambda)^k \cdot \lambda$. We can consider this as process generating finite words, which stops after each step with probability $\lambda$.
\item $\uniformInf$ is the uniform probability measure on $\Sigma^{\omega}$. Formally, we define $\uniformInf$ on basic sets $u \Sigma^{\omega} = \{ u w \mid w \in \Sigma^{\omega} \}$ as follows:
$\uniformInf(u \Sigma^{\omega}) = |\Sigma|^{-|u|}$. Then, $\uniformInf$ is the unique extension on all Borel sets in $\Sigma^{\omega}$ considered with the product topology~\cite{feller,BaierBook}.
\end{itemize}

\Paragraph{Automata as random variables}
A weighted automaton $\aut$ defines the measurable function $\valueL{\aut}(w) \colon \Sigma^\omega \mapsto \R$ that assigns values to words.
We interpret such functions as random variables w.r.t. the probabilistic measure $\uniformInf$.
Hence, for a given automaton $\aut$, we consider the following quantities:
\begin{enumerate}
\item $\expected(\aut)$ --- the expected value of
the random variable $\valueL{\aut}$ defined by $\aut$ w.r.t. the uniform distribution $\uniformInf$ on $\Sigma^{\omega}$, and
\item $\expected(\aut \mid u\Sigma^{\omega})$ --- the conditional expected value, defined for $\uniformInf$ as the expected value 
of $\lang^u$ such that $\lang^u(w) = \valueL{\aut}(uw)$.
\end{enumerate}

We consider automata as generators of random variables --- two automata are \emph{almost equivalent} if they define the almost equal random variables. 
Formally, we say that $\aut_1$ and $\aut_2$ are \emph{almost equivalent} if and only if 
for almost all words $w$ we have $\valueL{\aut_1}(w) = \valueL{\aut_2}(w)$.
Note that almost all words means all except for words from some $Y$ of probability $0$.

$\flimavg$-automata considered over probability distributions are equivalent to Markov chains with the long-run average objectives presented in~\cite[Section~10.5.2]{BaierBook}.

\begin{theorem}[\cite{BaierBook}]
Let $\aut$ be a $\flimavg$-automaton. 
(1)~If $\aut$ is strongly connected, then almost all words have the same value equal to $\expected(\aut)$.
(2)~For almost all words $w$, the run on $w$ eventually reaches some bottom strongly connected component (SCC) of $\aut$.
\label{f:ExpectedLIMAVG}
\end{theorem}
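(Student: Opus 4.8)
The plan is to reinterpret $\aut$, read on a $\uniformInf$-random word, as a finite Markov chain carrying rewards on its edges, and then invoke the classical ergodic theory of finite Markov chains. First I would set up the chain. Since $\aut$ is deterministic and $\delta$ is total, reading an infinite word $w$ amounts to drawing the letters $w[1], w[2], \dots$ independently and uniformly from $\Sigma$ and following the recursion $\pi[i] = \delta(\pi[i-1], w[i])$ from $\pi[0] = q_0$. Thus the sequence of visited states $X_i = \pi[i]$ is a Markov chain $\markov_\aut$ on the finite state space $Q$ with transition probabilities $\prob(q \to q') = |\{a \in \Sigma : \delta(q,a) = q'\}| \cdot |\Sigma|^{-1}$; formally, the pushforward of $\uniformInf$ under $w \mapsto (\pi[i])_i$ is exactly the law of $\markov_\aut$ started in $q_0$. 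Crucially, the communicating classes of $\markov_\aut$ coincide with the strongly connected components of the transition graph of $\aut$, because an edge has positive probability exactly when some letter realises it; in particular the bottom SCCs of $\aut$ are precisely the closed (recurrent) classes of $\markov_\aut$. Each step also carries the reward $\cost(\pi)[i] = \cost(\pi[i-1], w[i], \pi[i])$, so $\favg{\cost(\pi)[0,k]}$ is the Ces\`aro average of an edge-reward process over $\markov_\aut$.

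For (2), I would use the standard dichotomy for finite Markov chains: every state is either transient or recurrent, every recurrent state lies in a closed communicating class, and almost surely the chain visits each transient state only finitely often. Hence with probability one the trajectory eventually enters some closed class and never leaves it. Translating back through the correspondence above, for $\uniformInf$-almost all words the run eventually reaches some bottom SCC of $\aut$, which is exactly (2).

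For (1), if $\aut$ is strongly connected then $\markov_\aut$ is irreducible on a finite state space, hence positive recurrent with a unique stationary distribution $\mu$. Let $\bar c(q) = |\Sigma|^{-1} \sum_{a \in \Sigma} \cost(q,a,\delta(q,a))$ denote the expected one-step reward from $q$. The strong law of large numbers for irreducible finite Markov chains gives, almost surely, $\frac{1}{k}\sum_{i=1}^{k} \cost(\pi)[i] \to \sum_{q \in Q} \mu(q)\, \bar c(q) =: c$, a constant independent of the word and of the starting state. Consequently $\valueL{\aut}(w) = c$ for almost all $w$, and since $c$ is constant this forces $\expected(\aut) = c$ as well, giving (1).

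The main obstacle is the rigorous justification of the almost-sure convergence of these edge-reward averages. Two points need care. First, the reward depends on the letter read and not only on the state, so given the state trajectory the letter along an edge is still random; I would handle this by applying the ergodic theorem to the enriched chain $\tuple{X_{i-1}, w[i]}$ (equivalently, noting that along a fixed edge the letters are i.i.d.\ uniform), which yields the same limit $c$. Second, $\flimavg$ uses $\limsup$ rather than $\lim$; but Ces\`aro averages wash out any finite initial segment, so the limit exists almost surely and the $\limsup$ coincides with it. Everything else is bookkeeping around the measure-preserving correspondence between $\uniformInf$ and the law of $\markov_\aut$, and once this correspondence is in place both (1) and (2) reduce cleanly to the long-run average Markov-chain results of~\cite[Section~10.5.2]{BaierBook}.
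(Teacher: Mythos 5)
Your proposal is correct and follows essentially the same route as the paper, which does not prove the statement itself but cites it from~\cite{BaierBook} via precisely the correspondence you spell out: a deterministic $\flimavg$-automaton read on $\uniformInf$-random words is a finite Markov chain with transition rewards, so (2) is the standard recurrence dichotomy and (1) is the ergodic theorem for irreducible finite chains. Your handling of the edge-dependent rewards and of $\limsup$ versus $\lim$ is the right bookkeeping, so the write-up amounts to making the cited reduction explicit rather than offering a different argument.
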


Theorem~\ref{f:ExpectedLIMAVG} has important consequences. 
First, we can contract each bottom SCC to a single state with self-loops of the same weight $x$ being the expected value of that SCC. 
We refer to $x$ as the value of that SCC.
Such an operation does not affect almost equivalence. 
Second, while reasoning about $\flimavg$-automata, we can neglect all weights except for the values of bottom SCCs.
We will omit weights other than the values of bottom SCCs.

\newcommand{\autTarget}{\aut^T}

\Paragraph{Samples}
A sample is a set of \emph{labeled examples} of some (hidden) function $\lang \colon \Sigma^{\omega} \to \R$. 
In the classical automata-learning approach words are finite, and hence they can be presented as examples along with the information  whether the word belongs to the hidden language or not.
In the infinite-word case, however, words cannot be given directly to a learning algorithm. We present two alternative solutions for this problem. One is to restrict examples to ultimately periodic words, which have finite presentation, and the other is to consider finite words and ask for conditional expected values. We discuss both approaches below. 

To distinguish samples with different types of labeled examples, we call them $U$-samples, $E$-samples and $(E,n)$-samples.
\begin{enumerate}
\item \emph{Ultimately periodic words}. Consider an example being an ultimately periodic word $uv^{\omega}$. 
It can presented as a pair of finite words $(u,v)$ and
we consider labeled examples $(u,v,x)$, where $u,v \in \Sigma^*$ and $x = \lang(uv^{\omega})$.
A set of labeled examples $(u,v,x)$ is called an $U$-sample. 

To draw a random $U$-sample, we consider finite words $u,v$ to be selected independently at random according to distributions $\uniformFin{\lambda_1}$ and $\uniformFin{\lambda_2}$ for some $\lambda_1, \lambda_2$.
For such a set of pairs of words, we label them according to the function $\lang$.  

\item \emph{Conditional expected values}. 
We consider examples, which are finite words $u \in \Sigma^*$. 
A labeled example is a pair $(u,x)$, where $x = \expected(\lang \mid u\Sigma^{\omega})$ is the conditional expected value of $\lang$ under the condition that random words start with $u$.
For such labeled examples we consider $E$-samples consisting of labeled words of various length, and  
$(E,n)$-samples consisting of words of length $n$.
We assume that finite words for random $E$-samples are drawn according to a distribution $\uniformFin{\lambda}$ for some $\lambda$, and
finite words for random $(E,n)$-samples are drawn according to the uniform distribution $\uniformN{n}$.
\end{enumerate}

We only consider minimal consistent samples, i.e., samples that do not contain examples whose value can be computed from other examples in the sample. 
For instance, $\set{(a, a, 0), (a, aa, 1)}$ is an inconsistent $U$-sample, and  $\set{(aa, 0), (ab, \frac{1}{2}), (a, 1)}$ is an inconsistent $E$-sample over $\set{a, b}$.

\begin{remark}[Incompatible distributions]
Note that the distribution on ultimately periodic words differ from the uniform distribution on infinite words. 
The set of ultimately periodic words is countable and hence it has probability $0$ (according to the distribution $\uniformInf$).
Moreover, almost all infinite words contain all finite words as infixes, whereas this is not the case for ultimately periodic words under any probability distribution.
\end{remark}

\begin{remark}[Feasibility of conditional expectation]
Consider a $\flimavg$-automaton $\aut$ computing $\lang \colon \Sigma^\omega \to \R$. 
For a finite prefix $u$, we can compute $\expected(\lang \mid u\Sigma^{\omega})$ in polynomial time in $|\aut|$~\cite{BaierBook}.
If we consider $\aut$ to be a black-box, which can be controlled, then $\expected(\lang \mid u\Sigma^{\omega})$ can be approximated in the following way.
We pick random words $v_1, \ldots, v_k$ of length $k$, compute partial averages in $\aut$ of $uv_1, \ldots, uv_k$ and then take the average of these values. 
The probability that this process returns a value $\epsilon$-close to $\expected(\lang \mid u\Sigma^{\omega})$ converges to $1$ at exponential rate with $k$.
\end{remark}
 
\section{Passive learning}
\label{sec:passive}
\newcommand{\autInd}{\mathcal{A}}
\newcommand{\autIndc}{\mathcal{\overline{A}}}
\newcommand{\len}[1]{||#1||}
\newcommand{\Sref}[1]{\textbf{\textsc{S\ref*{#1}}}}

Passive learning corresponds to a scenario with an uncontrolled  working black-box system. The learner can only observe system's output, and its goal is to create an approximate model of the system. 
This task comprise of two problems. The first problem, \emph{characterization}, is to assess whether the observations cover most, if not all, behaviors of the system. 
The second one, called \emph{sample fitting}, is to create a reasonable automaton consistent with the observations.
In this section we discuss both problems.

\subsection{Characterization}
A sample can cover only small part of the system. 
It is sometimes argued~\cite{Lang92,oncina1992inferring}, however, that if a sample is large enough, then it is likely to cover most, if not all,
 important behaviors.
 We show that for some $\flimavg$-automata, 
 randomly drawn samples of size less than exponential are unlikely to demonstrate any probable values.

 Let $\len{S}$ denote the sum of the lengths of all the examples in a sample $S$. 
 A sample \emph{distinguishes} two automata if it is consistent with exactly one of them.  We show the following.

\begin{theorem}\label{thm:characterisation}
For any $n$ there are two automata $\autInd_n$, $\autIndc_n$ of size $n+4$ such that for almost all words $w$ 
we have $|\valueL{\autInd_n}(w)-\valueL{\autIndc_n}(w)|=2$, but 
a random $U$-sample, $E$-sample or $(E, k)$-sample (for any $k$) $S$ distinguishes $\autInd_n$ and $\autIndc_n$ with the probability at most $\frac{\len{S}}{2^n}$.
\end{theorem}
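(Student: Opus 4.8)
The plan is to work over $\Sigma=\set{a,b}$ and build both automata on a common \emph{transient skeleton} that detects the first occurrence of the fixed length-$n$ word $s=a^{n-1}b$ and then branches, on the single letter read immediately afterwards, into one of two absorbing sinks $S_a,S_b$. The skeleton is a standard matching chain: states $p_0,\dots,p_{n-1}$ counting consecutive $a$'s (from $p_i$ with $i<n-1$ read $a\to p_{i+1}$, $b\to p_0$; from $p_{n-1}$ read $a\to p_{n-1}$, $b\to d$), a state $d$ reached exactly when $s$ is completed, and from $d$ the letters $a,b$ lead to $S_a,S_b$. Both automata share this skeleton together with all its (irrelevant) weights; they differ only in the sink values, with $\autInd_n$ setting $S_a={+}1,S_b={-}1$ and $\autIndc_n$ setting $S_a={-}1,S_b={+}1$. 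This uses $n+O(1)$ states (the constants can be tuned to match size $n+4$). By Theorem~\ref{f:ExpectedLIMAVG} only the sink values matter, and since almost every infinite word contains $s$, for almost all $w$ the run reaches a sink and $|\valueL{\autInd_n}(w)-\valueL{\autIndc_n}(w)|=2$, giving the first claim.

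The key observation is a \emph{balancing lemma}: conditioned on the run sitting in any skeleton state after a prefix that has not yet entered a sink, each of $S_a,S_b$ is reached with conditional probability exactly $\frac12$. Indeed, the first completion of $s$ is a stopping time for the i.i.d.\ uniform letter stream, so by the strong Markov property the letter read right afterwards is uniform and independent of the past. Hence both automata have conditional expected value $0$ from every non-committed state, so $\expected(\autInd_n\mid u\Sigma^\omega)=\expected(\autIndc_n\mid u\Sigma^\omega)=0$ whenever the run on $u$ has not reached a sink, and $\valueL{\autInd_n}(uv^\omega)=\valueL{\autIndc_n}(uv^\omega)$ whenever $uv^\omega$ never reaches a sink (the skeletons and their weights coincide). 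Therefore an example distinguishes the two automata exactly when it \emph{reveals} $s$: an $E$- or $(E,k)$-example $u$ distinguishes iff $u$ contains $s$ followed by at least one further letter, and a $U$-example $(u,v,\cdot)$ distinguishes iff $s$ occurs in $uv^\omega$.

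It remains to bound these revealing probabilities and union-bound over the sample. For finite-word samples this is immediate: conditioned on $|u|=\ell$ the word $u$ is uniform, so a union bound over the at most $\ell$ start positions (each matching the fixed word $s$ with probability $2^{-n}$) gives revealing probability at most $\ell\,2^{-n}=|u|/2^n$. The delicate case --- and the main obstacle --- is the $U$-sample, where a short period could a priori manufacture $s$ with constant probability (a pattern ending in a long monochromatic block is completed by $v=a$ or $v=b$); this is exactly why I choose $s=a^{n-1}b$, which is unbordered, as is every suffix $a^{m-1}b$, so $s$ has minimal period $n$. Reducing to start positions in $[0,\ell_u+\ell_v)$ by periodicity, I check each window matches $s$ with probability at most $2^{-n}$: a window inside $u$, or reaching at most $\ell_v$ letters into the periodic part, involves $n$ distinct uniform letters ($2^{-n}$); a window in the periodic region with $n>\ell_v$ would force period $\ell_v<n$ on $s$ (impossible); and a straddling window reaching more than $\ell_v$ letters into $v^\omega$ would force a suffix of $s$ of length $>\ell_v$ to have period $\ell_v$, which no suffix of $s$ admits. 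Hence, given the lengths, the probability that $s$ occurs in $uv^\omega$ is at most $(\ell_u+\ell_v)2^{-n}$. Conditioning on all example lengths and union-bounding over $S$ then yields $\prob[S\text{ distinguishes }\autInd_n,\autIndc_n]\le\len{S}/2^n$ in all three models.
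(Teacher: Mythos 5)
Your proposal is correct and follows essentially the same route as the paper: the same pattern-detector-plus-two-sinks construction with swapped $\pm 1$ weights, the observation that almost every word contains the pattern (giving difference $2$), and a union bound over positions with per-position probability $2^{-n}$ for a distinguishing occurrence. Your treatment of the $U$-sample case is in fact more careful than the paper's (which only speaks of occurrences inside $uv$, whereas the relevant event is an occurrence in $uv^{\omega}$): your unborderedness/periodicity argument is exactly what justifies the per-position bound there, so this is a welcome refinement rather than a deviation.
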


\begin{proof}
Consider the alphabet $\set{a, b}$ and $n \in \N$. 
We construct a $\flimavg$-automaton $\autInd_n$ with $n+4$ states. We use $n+2$ states to find the first occurrence of the infix $a^nb$ in the standard manner. When such an infix is found, the automaton moves to a state $q_a$ if the following letter is $a$ and to a state $q_b$ otherwise. In $q_a$ it loops with the weight $1$ and in $q_b$ it loops with the weight $-1$. All other weights are $0$. 
So $\autInd_n$ returns $1$ if the first occurrence of $a^nb$ if followed by $a$, $-1$ if it is followed by $b$, and $0$ if there is no $a^nb$.
The automaton $\autIndc_n$ has the same structure as $\autInd_n$, but the weights $-1$ and $1$ are swapped.

We first observe that $\autInd_n$ and $\autIndc_n$ differ over almost all infinite words.  
Indeed, an infinite word with probability $1$ contains the infix $a^{n}b$, and so on almost all words
one of the automata returns $-1$ 
and the other one $1$.
Consider a sample $S$ (it can be an $E$-sample, $(E,k)$-sample or $U$-sample).
This sample distinguishes automata $\autInd_n$ and $\autIndc_n$ only if it contains an example with the infix $a^{n}b$ (in case of $U$-samples, this means that this infix occurs in $uv$ of some example $(u,v)$) - all other examples for both automata are the same. 
The probability that $S$ contains $a^nb$ as an infix of one of its examples is bounded by $\frac{\len{S}}{2^{n+1}}$. Indeed, the number of positions in all words is $\len{S}$ and the probability that $a^nb$ occurs on some specific position is at most $\frac{1}{2^{n+1}}$
for all types of samples and any $k >0$ (if $k<n+1$, the probability is $0$). 
\end{proof}

Therefore to be able to distinguish just two automata with probability $1-\epsilon$, 
we need a sample such that $\frac{\len{S}}{2^{n+1}} >1-\epsilon$, which for fixed $\epsilon<1$ is of exponential size.
We show that the exponential upper bound is sufficient.

\Paragraph{$U$-samples} If automata $\aut_1$, $\aut_2$ of size $n$ recognize different languages, 
then there is a word $uw^\omega$ such that $\valueL{\aut_1}(uw^\omega) \neq \valueL{\aut_2}(uw^\omega)$ and 
the length of $u$ and $w$ is bounded by $n^2$. 
Assume for simplicity that $\lambda = \lambda_1 = \lambda_2$.
If the sample size is at least 
$|\Sigma|^{2n^2}  \cdot \ln \frac{|\Sigma|^{2n^2}}{\epsilon}     \cdot    (1-\lambda)^{-2n^2}               \cdot \lambda^{-2}$, 
then with probability $1-\epsilon$ a random sample contains all such words, and so distinguishes all 
the automata of size $n$.

\Paragraph{$E$-samples} $E$-samples do not distinguish almost-equivalent automata, hence we cannot learn automata exactly. 
However, exponential samples are enough to learn automata up to almost equivalence. 
To see that 
consider two automata $\aut_1$ and $\aut_2$  of size $n$ that are not almost equivalent.
Due to Theorem~\ref{f:ExpectedLIMAVG} there is a word $u \in \Sigma^*$ such that 
$u$ reaches bottom SCCs in both $\aut_1$ and $\aut_2$, and these bottom SCCs have different expected values.
Using standard pumping argument, we can reduce the size of $u$ to at most $n^2$. 
So if the sample size is at least $|\Sigma|^{n^2}  \cdot \ln \frac{|\Sigma|^{n^2}}{\epsilon}     \cdot    (1-\lambda)^{-n^2}               \cdot \lambda^{-1}$, then with probability $1-\epsilon$ it contains all words $u$ of size $n^2$, and therefore distinguishes all automata of size $n$.

For $(E, n)$-samples, the reasoning is the similar, assuming $n$ is quadratic in the size of the automaton: the sufficient sample size is $|\Sigma|^{n} \cdot \ln \frac{|\Sigma|^{n}}{\epsilon}$.

\subsection{Consequences for PAC learning}
We discuss the consequences of our results to the \emph{probably approximately correct} (PAC) model of learning~\cite{valiant1984theory}.
In the PAC framework, the learning algorithm should work independently of the probability distribution on samples. 
However, variants of the PAC framework have been considered where the distribution on samples is uniform~\cite{efficientDNF}.
In particular, PAC learning of DFA under the uniform distribution over words is a long-standing open problem~\cite{pitt1989inductive,Angluin15}.

We restrict the classical PAC model and assume that observations are drawn according to the distributions $\uniformN{n}, \uniformFin{\lambda}$ (as discussed in Section~\ref{sec:samples}) 
and the quality of the learned automaton is assessed using the uniform distribution over infinite words $\uniformInf$.

\begin{problem}[PAC learning under fixed distributions]
Given $\epsilon, \delta \in \Q^+$, $n \in \N$ and an oracle returning random labeled examples consistent with some automaton $\autTarget$ of size $n$, construct an automaton $\aut$ such that
with probability $1-\epsilon$ w e have $\expected(|\valueL{\aut} - \valueL{\autTarget}|) < 1 - \delta$.
\end{problem}

As a consequence of Theorem~\ref{thm:characterisation}, there is no PAC-learning algorithm for $\flimavg$-automata with $U$-samples (resp., $E$-samples or $(E,k)$-samples) that uses samples of polynomial size; in particular, there is no such algorithm working in polynomial time.

\begin{theorem}
The class of $\flimavg$-automata is not PAC-learnable  with $U$-samples, $E$-samples or $(E,k)$-samples.
\end{theorem}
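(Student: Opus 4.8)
The plan is to obtain the theorem directly from Theorem~\ref{thm:characterisation}, which already supplies, for every $n$, a pair of adversarial targets $\autInd_n,\autIndc_n$ of size $n+4$ satisfying $\expected(|\valueL{\autInd_n}-\valueL{\autIndc_n}|)=2$ (they differ by exactly $2$ on almost all words) while being indistinguishable by any subexponential sample. I would argue by contradiction: suppose a PAC-learning algorithm $L$ exists that, for the fixed choice $\epsilon=\delta=\tfrac14$, uses samples of polynomial size, i.e.\ with expected total length $\len{S}$ polynomial in $n$. I run $L$ in two hypothetical worlds, one with target $\autInd_n$ and one with target $\autIndc_n$, and show that for large $n$ it cannot meet the PAC accuracy guarantee in both.

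The first key step is an indistinguishability argument by coupling. I would draw the unlabeled examples once, from the relevant sample distribution, and also fix the learner's internal coin flips; call the combined randomness $\omega$. By the analysis in the proof of Theorem~\ref{thm:characterisation}, the labels produced by $\autInd_n$ and by $\autIndc_n$ agree on every example that does not contain the infix $a^nb$. Hence, letting $B$ be the event that some example contains $a^nb$, on the complement $B^{c}$ the entire labeled sample, and therefore the output automaton $\aut(\omega)$, is literally the same in both worlds. The theorem bounds $\prob(B)\le \len{S}/2^{n}$; since the examples are drawn with a fixed termination probability (giving expected length $O(1/\lambda)$ each), $\expected\len{S}$ is polynomial, so $\prob(B)\le \expected\len{S}/2^{n}\to 0$.

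The second key step is a separation by the triangle inequality. Because the two targets are at $L^1$-distance $2$ and the accuracy threshold $1-\delta$ is strictly below $1$, no single automaton can be a valid output for both: if $\expected(|\valueL{\aut}-\valueL{\autInd_n}|)<1-\delta$, then $\expected(|\valueL{\aut}-\valueL{\autIndc_n}|)\ge 2-(1-\delta)=1+\delta>1-\delta$. Writing $G_1$ (resp.\ $G_2$) for the event that $\aut(\omega)$ is accurate for $\autInd_n$ (resp.\ $\autIndc_n$) — a deterministic property of the emitted automaton — this gives $G_1\cap G_2\cap B^{c}=\emptyset$, since on $B^{c}$ both events are decided by the same automaton $\aut(\omega)$. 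The PAC guarantee states $\prob_{\autInd_n}(G_1)\ge 1-\epsilon$ and $\prob_{\autIndc_n}(G_2)\ge 1-\epsilon$. Splitting each probability over $B$ and $B^{c}$, using that the two worlds coincide on $B^{c}$, and that $\prob(G_1\cap B^{c})+\prob(G_2\cap B^{c})\le\prob(B^{c})\le 1$, I obtain $2(1-\epsilon)\le 1+2\prob(B)$, i.e.\ $\prob(B)\ge\tfrac{1-2\epsilon}{2}=\tfrac14$. This contradicts $\prob(B)\to 0$ for $n$ large enough.

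I expect the only real obstacle to be the careful bookkeeping of this coupling uniformly across the three sample types. The triangle-inequality separation and the counting that two disjoint success events cannot both have probability $\ge 1-\epsilon$ on a shared high-probability event $B^{c}$ are routine; the delicacy lies in feeding both hypothetical runs the same word-draws and the same internal coins, and in verifying that $\prob(B)$ remains negligible for $U$-, $E$-, and $(E,k)$-samples alike (in each case a polynomial total length can contain the fixed infix $a^nb$ only with probability $O(\len{S}/2^{n})$, with the $(E,k)$ case giving probability $0$ when $k<n+1$).
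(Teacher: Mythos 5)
Your proposal is correct and follows exactly the paper's route: the paper derives this theorem directly as a consequence of Theorem~\ref{thm:characterisation}, using the same pair $\autInd_n,\autIndc_n$ that differ by $2$ on almost all words yet are indistinguishable by subexponential samples except with probability $\len{S}/2^n$. Your write-up merely fills in the standard details (coupling the two worlds on the event that no example contains $a^nb$, plus the triangle-inequality argument that one output automaton cannot satisfy the accuracy bound for both targets), which the paper leaves implicit.
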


\subsection{Sample fitting}
Once we have a sample, the problem of finding an automaton fitting the sample can be solved in polynomial time in a trivial way: we create an automaton that is a tree such that every word of a given sample leads to a different leaf in this tree, and then we add loops with appropriate values in the leaves
(similarly to a prefix tree acceptor~\cite{GrammaticalInference} for finite automata). 
This solution leads to an automaton that \emph{overfits} the samples, as it works well only for the sample and it is unlikely to work well with words not included in the sample. 
Besides, the automaton is linear in the size of the sample, not in the size of the black-box system.
For a fixed automaton we can construct arbitrarily large U-samples (or E-samples) consistent with it and hence the gap between the size of such an automaton and the black-box system is arbitrarily large.
To exclude such solutions, we restrict the size of the automaton to be constructed.
 We study the following problem.

\begin{problem}[Sample fitting]
Given a sample $S$ and $n \in \N$, 
construct a $\flimavg$-automaton  
with at most $n$ states, 
which is consistent with $S$.
\end{problem}

The decision version of this problem only asks whether such an automaton exists.
We show that this problem is \NP{}-complete, regardless of the sample representation.
For hardness, we reduce the NP-complete problem \satz{} \cite{GrammaticalInference}, which is the SAT problem restricted to CNF formulas such that each clause contains only positive literals or only negative literals.

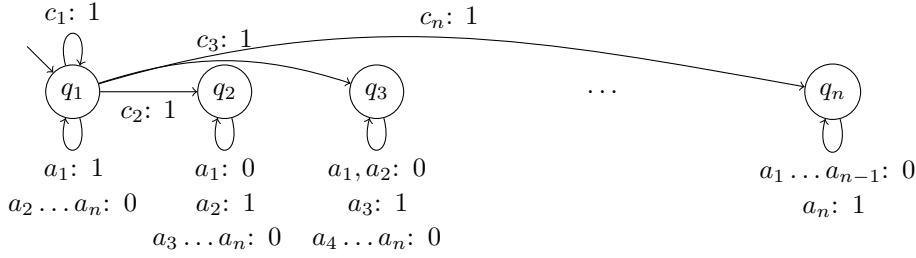
\begin{figure}
\begin{tikzpicture}[state/.style={draw,circle}]
\newdimen\nodeDist
\nodeDist=6mm

\node[state] (c1) at (0,0) {$q_1$};
\node[state] (c2) at (2,0) {$q_2$};
\node[state] (c3) at (4,0) {$q_3$};
\node[state] (cn) at (10,0) {$q_n$};
\node (dots) at (7, 0) {\dots};
\draw[->] (-0.6, 0.6) to (c1);

\draw[->] (c1) to node[below]{$c_2$: 1} (c2);
\draw[->] (c1) to[bend left,out=18,in=162] node[above]{$c_3$: 1} (c3);
\draw[->] (c1) to[bend left,out=17,in=170] node[above]{$c_n$: 1} (cn);
\draw[->,loop above] (c1) to node[above]{$c_1$: 1} (c1);

\draw[->,loop below] (c1) to node[below,text width=1.7cm] {\centering{}$a_1$: 1\\$a_2 \dots a_n$: 0} (A);
\draw[->,loop below] (c2) to node[below,text width=1.9cm] {\centering{}$a_1$: 0\\$a_2$: 1\\\centering{} $a_3 \dots a_n$: 0} (A);
\draw[->,loop below] (c3) to node[below,text width=1.7cm] {\centering{}$a_1, a_2$: 0\\ \centering{} $a_3$: 1  \\ $a_4 \dots a_n$: 0} (A);
\draw[->,loop below] (cn) to node[below,text width=2.2cm] {\centering{} $a_1 \dots a_{n-1}$: 0 \\ \centering{} $a_n$: 1\\} (A);
\end{tikzpicture}
\caption{
The canonical automaton $\aut_{\varphi}$ from the proof of Theorem~\ref{thm:sampleupw}}
\label{fig:canonical}
\end{figure}

\begin{theorem}\label{thm:sampleupw}
The sample fitting problem is \NP{}-complete for $U$-samples.
\end{theorem}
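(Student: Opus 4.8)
The plan is to first argue the easy direction: a certificate is simply a $\flimavg$-automaton $\aut$ with at most $n$ states (with weights of polynomial bit-length, which I would justify by a standard small-model argument on the values of bottom SCCs). Given such $\aut$, I can verify consistency with a $U$-sample in polynomial time: for each labeled example $(u,v,x)$, I run $\aut$ on $uv^\omega$, which amounts to following the deterministic run until it cycles inside the period $v$, identifying the bottom SCC reached, and comparing its value against $x$. By Theorem~\ref{f:ExpectedLIMAVG} and the preceding contraction remark, this value is computable in polynomial time. Since consistency is checked example-by-example, the whole verification is polynomial, so the problem is in \NP{}.

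\textbf{Hardness via \satz{}.} The core work is the reduction from \satz{}. Given a monotone CNF formula $\varphi$ with variables $a_1,\dots,a_n$ and clauses $c_1,\dots,c_m$ (each all-positive or all-negative), I would build a $U$-sample $S_\varphi$ together with a size bound so that a fitting automaton of bounded size exists iff $\varphi$ is satisfiable. The canonical automaton $\aut_\varphi$ in Figure~\ref{fig:canonical} is the intended skeleton: it has one state $q_j$ per variable $a_j$, and its transition/weight structure encodes, for a periodic word $v$, which variable-states are ``visited'' and which clause-constraints are witnessed. The design goal is that the states the learner is forced to merge correspond exactly to a truth assignment: a variable $a_j$ set to true is encoded by collapsing $q_j$ appropriately, and the self-loop weights (the $a_j{:}1$ entries) are arranged so that the average value computed on the examples reflects whether each clause is satisfied by the induced assignment.

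\textbf{The two directions.} For soundness I would show that from any satisfying assignment $\sigma$ of $\varphi$ one reads off a partition of the variable-states into the small number of classes permitted by the size bound, yielding an automaton consistent with every example in $S_\varphi$; the monotone (all-positive/all-negative per clause) restriction is what makes the per-clause value constraints simultaneously satisfiable by a single consistent weight choice. For completeness I would show the converse: any $\flimavg$-automaton with at most the allotted states that is consistent with $S_\varphi$ must, by a pigeonhole/counting argument on how the distinguishing examples route through it, induce a well-defined truth assignment that satisfies every clause — the examples are engineered so that violating a clause forces either an extra state (exceeding the bound) or an inconsistent loop value.

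\textbf{Main obstacle.} The hard part will be calibrating the numerical weights and the state bound so that the two directions close exactly — ensuring that the only way to meet both the size constraint \emph{and} every example's target value is a genuine satisfying assignment, with no ``cheating'' automaton that merges states in an unintended way or exploits the limit-average aggregation to fake a value. Controlling this interplay between the combinatorial merging of states and the arithmetic of the averaged loop-weights, uniformly across all clauses, is where the reduction's correctness proof will demand the most care.
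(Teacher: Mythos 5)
Your membership argument is essentially the paper's and is fine, modulo one small point the paper handles explicitly: if the bound $n$ exceeds the total length of the sample, the tree-like (prefix-tree) automaton already fits, so one answers yes outright; only then is it legitimate to guess an automaton, since otherwise ``at most $n$ states'' need not be a polynomial-size certificate. The hardness direction, however, is not a proof but a statement of intent, and the mechanism you gesture at differs from the one that actually works. You propose to encode a satisfying assignment in \emph{which states get merged} (``collapsing $q_j$'' for true variables) and in calibrated loop weights on the $a_j$-letters, and you yourself flag that making the two directions close is an unresolved obstacle. The concrete content of the reduction --- the actual $U$-sample --- is missing, and without it neither direction can be checked.

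The paper's reduction from \satz{} uses quite specific gadgets that your sketch does not supply. Over the alphabet $\set{a_i,c_i,d_i \mid i \le n}\cup\set{b,t}$ it includes examples $(c_i,a_j,x)$ with $x=1$ iff $i=j$; these force the $n$ states reached after $c_1,\dots,c_n$ to be pairwise distinct, so with the bound set to $n$ they are \emph{all} the states --- the assignment is not encoded by merging at all, but read off as $\sigma(x_i)=\valueL{\aut}(c_i t^{\omega})$, i.e.\ from the weight of the $t$-loop at the state reached by $c_i$. Examples $(c_i,d_j,x)$ encode the variable--clause incidence, $(c_i b, d_i, 1)$ forces the $b$-successor of the state for clause $C_i$ to be the state of some variable occurring in $C_i$, and $(c_i b, t, x)$ with $x\in\set{0,1}$ (depending on whether $C_i$ is all-positive or all-negative --- this is exactly where monotonicity is used, since a clause's requirement becomes a single target value) forces that variable to have the satisfying truth value. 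Your proposal names none of these devices, and your alternative idea of collapsing variable states would in fact conflict with any examples strong enough to pin down the state set; so as it stands the reduction --- the heart of the theorem --- has not been carried out.
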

\begin{proof}
The membership in \NP{} follows from the following observation: if $n$ is greater than the total length of the samples, then return yes as the tree-like solution works. Otherwise, non-deterministically pick an automaton of the size $n$ and 
check whether it fits the sample. 

The $\NP$-hardness proof is inspired by the construction from~\cite[Theorem 6.2.1]{GrammaticalInference}.
For a given instance of the \satz{} problem $\varphi = \bigwedge_{i=1}^n C_i$ over variables $x_1, \ldots, x_n$ (not all variables need to occur in $\varphi$), 
we construct a $U$-sample $S_{\varphi}$ such that there is an automaton with $n$ states fitting $S_{\varphi}$ 
if and only if $\varphi$ is satisfiable. 

We fix the alphabet $\set{a_i,c_i, d_i \mid i=1, \ldots,n } \cup \set{b,t}$.
The sample $S_{\varphi}$ consists of:
\begin{enumerate}[S1]
\item\label{i:sampleo} $(c_i, a_j, x)$  for each $i, j \in \set{1, \dots, n}$, where $x$ is $1$ if $i=j$, and $0$ otherwise.
\item\label{i:sampleoo} $(c_i, d_j, x)$ for each $i, j \in \set{1, \dots, n}$, where $x$ is $1$ if $x_i$ is in $C_j$, and $0$ otherwise.
\item\label{i:sampleooo} $(c_i b, d_i, 1)$  for each $i \in \set{1, \dots, n}$.
\item\label{i:sampleVar} $(c_i b, t, x)$  for each $i \in \set{1, \dots, n}$, where $x$ is $1$ if the clause $C_i$ contains only positive literals and $0$ if it contains only negative literals.
\end{enumerate}

Assume that $\varphi$  is satisfiable and let $\sigma : \set{x_1, \dots, x_n} \to \set{0, 1}$ be 
a satisfying valuation. Then, we construct an automaton $\aut_{\varphi}$ 
consistent with the sample $S_{\varphi}$ starting from the structure presented in Figure~\ref{fig:canonical}. 
Then, we add the following transitions:
\begin{itemize}
\item for each $i$, a loop in $q_i$ on the letter $t$ with the value $\sigma(x_i)$,
\item for each $i, j$, a loop in $q_i$ on the letter $d_j$ with the value $1$ if $x_i$ is in $C_j$ and $0$ otherwise, 
\item for each clause $C_i$, if $C_i$ is satisfied because of a variable $x_j$, then we add a transition from $q_i$ to $q_j$ over $b$ (if there are multiple possible variables, we choose any).
\end{itemize}

The remaining transitions can be set in arbitrary way. 
The obtained automaton $\aut_{\varphi}$ is consistent with the sample  $S_{\varphi}$.

Now assume that there is an automaton $\aut$ of $n$ states, which is consistent with the sample.
We show that the valuation $\sigma$ such that $\sigma(x_i) = \valueL{\aut}(c_it^\omega) $ satisfies $\varphi$.
Let $q_i$ be the state where the automaton $\aut$ is after reading the word $c_i$. By \Sref{i:sampleo}, all the states $q_1, \dots, q_n$ are pairwise different. Since there are only $n$ states, $q_1, \dots, q_n$ are all the states of $\aut$.
Now consider any clause $C_i$. Let $q_j$ be the state of $\aut$ after reading $c_ib$. Notice that by \Sref{i:sampleooo}, the value of $d_i^\omega$ in $q_j$ is 1, and by \Sref{i:sampleoo}, this means that $x_j$ is in $C_i$. 
If $C_i$ contains only positive literals, then the value of $t^\omega$ in $q_j$ is $1$ by \Sref{i:sampleVar}, which means that $\sigma(x_j)=1$ and that $C_i$ is satisfied. The other case is symmetric.
\end{proof}

\begin{theorem}\label{thm:samplee}
The sample fitting problem is \NP{}-complete for $E$-samples.
\end{theorem}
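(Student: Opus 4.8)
The membership in \NP{} is exactly as in Theorem~\ref{thm:sampleupw}: if the state bound exceeds $\len{S}$ we answer yes (the tree-like solution works), and otherwise we guess an automaton with at most $n$ states and check consistency, using that each conditional expected value is computable in polynomial time (the Feasibility remark). So the plan concentrates on \NP{}-hardness, again reducing from \satz{}. For a prefix $u$ I abbreviate $e(u)=\expected(\aut\mid u\Sigma^{\omega})$; this depends only on the state that $\aut$ reaches on $u$, so writing $\hat u$ for that state I may regard it as a function $f(\hat u)=e(u)$ of states.

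The idea is to reuse the skeleton of the proof of Theorem~\ref{thm:sampleupw} but to replace every ultimately periodic probe $w^{\omega}$ by a prefix that is \emph{forced into a bottom SCC of a known value}, so that its conditional expected value equals that value. Concretely I add two \emph{sink} states $s_0,s_1$ whose self-loops all have weight $0$ and $1$; they are single-state bottom SCCs of value $0$ and $1$, so any prefix reaching one of them has conditional expected value $0$ or $1$. For a formula with $n$ variables (and, after padding, $n$ clauses) I set the state bound to $n+2$: the $n$ variable states $q_1,\dots,q_n$ reached by $c_1,\dots,c_n$ as in Figure~\ref{fig:canonical}, plus $s_0,s_1$ reached by two fresh letters $c_{n+1},c_{n+2}$. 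The sample $S_{\varphi}$ consists of four families of labelled words, each a prefix followed by a \emph{single} probe letter (so every labelled word lands in a sink): (i) \emph{variable signatures} $e(c_i a_k)$ equal to $1$ if $i=k$ and $0$ otherwise; (ii) \emph{sink locks} $e(c_{n+1}y)=1$ and $e(c_{n+2}y)=0$ for \emph{every} probe letter $y\in\set{a_1,\dots,a_n,d_1,\dots,d_n,t}$; (iii) \emph{navigation values} $e(c_i b\, d_i)=1$ and $e(c_i b\, t)=\beta_i$, where $\beta_i=1$ iff $C_i$ is all-positive; and (iv) \emph{exclusion constraints} $e(c_i b\, a_k)=0$ for every $k$ with $x_k\notin C_i$.

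For the forward direction I build the automaton of Figure~\ref{fig:canonical} augmented with $s_0,s_1$: set $\delta(q_i,a_k)$ to $s_1$ if $i=k$ and to $s_0$ otherwise, $\delta(q_i,t)=s_{\sigma(x_i)}$, $\delta(q_i,d_k)=s_1$ iff $x_i\in C_k$, route $b$ from $q_i$ to a state $q_{w(i)}$ of a literal satisfying $C_i$, and make $s_0,s_1$ absorbing; a direct check gives all four families. For the converse I rely only on two robust facts, so no bound on the weights is needed. First, family (i) forces $q_1,\dots,q_n$ pairwise distinct (if $q_i=q_j$ then $e(c_i a_i)=e(c_j a_i)$, i.e.\ $1=0$), while family (ii) gives $s_0$ and $s_1$ the all-zero and all-one $a$-signatures; hence all $n+2$ states are distinct and are therefore \emph{all} the states. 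Second, for each clause the target $r_i=\delta(q_i,b)$ must be a variable state $q_j$ with $x_j\in C_i$: it cannot be $s_1$, since (iv) needs $f(\delta(s_1,a_k))=0$ for some $k\notin C_i$ whereas (ii) forces this to be $1$; it cannot be $s_0$, since (iii) needs $f(\delta(s_0,d_i))=1$ whereas (ii) forces $0$; and if $r_i=q_j$ then (iv) with $k=j$ forces $j\in C_i$. Reading $t$ then gives $\sigma(x_j):=e(c_j t)=\beta_i$, so the literal of $x_j$ satisfies $C_i$; consistency of $\aut$ forbids a variable from witnessing two opposite polarities, so $\sigma$ is well defined and satisfies $\varphi$.

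The step I expect to be the main obstacle is precisely this converse ``no escape'' argument. Unlike the $U$-sample case, a conditional expected value is an \emph{average} over all reachable bottom SCCs rather than a read-out of a single transition, so without further care a fitting automaton could meet the clause constraints by routing $b$ into a sink (or into an unconstrained surplus state) and thereby avoid committing to a satisfying literal; an instance such as $(x_1)\wedge(\neg x_1)$ shows that this would make the reduction unsound. The two design decisions above are exactly what defeat this: capping the size at $n+2$ (so there are no surplus states) and \emph{fully locking} both sinks via family (ii) on \emph{all} probe letters (so the navigation constraints (iii) and (iv) are violated by $s_0$ and $s_1$). This is what makes the averaging collapse to the intended Boolean read-outs. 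The remaining checks --- that $S_{\varphi}$ can be made minimal and consistent, and that the construction meets the bound --- are routine.
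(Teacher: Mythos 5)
Your proposal follows the same skeleton as the paper's proof: reduce from \satz{}, keep the $n$ variable states reached by $c_1,\ldots,c_n$, allow two extra states acting as value-$0$ and value-$1$ sinks (so the bound is $n+2$), and read off the valuation from the conditional expectation after $c_j t$. The difference is in the sample design. The paper simply concatenates the triples of the $U$-sample of Theorem~\ref{thm:sampleupw} into pairs $(uv,x)$ and asserts that any fitting automaton must contain an all-$0$ state and an all-$1$ state (cf.\ Figure~\ref{fig:canonicaltwo}), leaving the ``no escape into a sink'' issue implicit; you instead add two fresh letters $c_{n+1},c_{n+2}$ with explicit sink-lock examples, and you replace the membership signatures on the $d$-letters by exclusion examples $e(c_i b\,a_k)=0$ for $k\notin C_i$, using $e(c_ib\,d_i)=1$ only to exclude the $0$-sink. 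This buys a cleaner and more explicit converse: the $n+2$ states are forced to be exactly $q_1,\ldots,q_n,s_0,s_1$, and the averaging objection you raise (routing $b$ into a sink) is genuinely blocked by the locks, so your argument is, if anything, more detailed than the paper's at precisely the delicate step.

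One concrete hole remains in your ``no escape'' step: you exclude $r_i=s_1$ by invoking an exclusion constraint $e(c_ib\,a_k)=0$ ``for some $k\notin C_i$'', which silently assumes that clause $C_i$ does not contain all $n$ variables; for an all-positive clause containing every variable the family (iv) is empty and $r_i=s_1$ is not excluded by your argument as written. This is fixable in one line --- e.g., pad the instance with an extra dummy variable (you already pad to equalize the numbers of variables and clauses) so that no clause mentions all variables, or note that with $\#$clauses $=\#$variables a full positive clause can always be satisfied once the remaining clauses have consistent witnesses --- but you should say so explicitly, since otherwise the case split ``it cannot be $s_1$'' is incomplete. The remaining deferred checks (NP membership via polynomial-time evaluation of conditional expectations, minimal consistency of the sample, the forward construction) are indeed routine.
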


\begin{proof}
The proof is similar to the proof of Theorem~\ref{thm:sampleupw}. For the \NP{}-hardness, the sample now is obtained from the sample in the proof of Theorem~\ref{thm:sampleupw} by replacing every triple $(u, v, x)$ by the pair $(uv, x)$. However, now we ask for an automaton of size $n+2$.
If there is a valuation that satisfies a given set of clauses, then one can construct an automaton based on the one presented in Figure~\ref{fig:canonicaltwo}.

On the other hand, if there is an automaton fitting the sample, then it has to have a state where the expected value of any word is $0$, a state where the expected value of any word is $1$, and $n$ different states reachable by each $c_1, 
\dots, c_n$. The rest of the proof is virtually the same as in Theorem~\ref{thm:sampleupw}, except that now we define $\sigma$ such that 
$\sigma(x_i)$ is the expected value of words with the prefix $c_it$.
\begin{figure}
\begin{tikzpicture}[state/.style={draw,circle}]
\newdimen\nodeDist
\nodeDist=6mm

\node[state] (c1) at (0,0) {$q_1$};
\node[state] (c2) at (4,0) {$q_2$};
\node[state] (cn) at (10,0) {$q_n$};
\node (dots) at (7, 0) {\dots};

\node[state] (false) at (5,-1.2) {$q_F$};

\node[state] (true) at (5,1.5) {$q_T$};

\draw[->] (-0.6, 0.6) to (c1);

\draw[->] (c1) to node[below]{$c_2$} (c2);
\draw[->] (c1) to[bend left,in=170,out=10] node[above]{$c_n$} (cn);

\draw[->] (c1) to node[above]{$a_1$} (true);
\draw[->] (c1) to node[below=0.12em,text width=4cm]{$a_2$ \dots $a_n$} (false);

\draw[->] (c2) to node[left]{$a_2$} (true);
\draw[->] (c2) to node[right,text width=4cm]{$a_1$ $a_3$ \dots $a_n$} (false);

\draw[->] (cn) to node[above]{$a_n$} (true);
\draw[->] (cn) to node[below right=0.2em,text width=2.6cm]{$a_1$ \dots $a_{n-1}$} (false);

\draw[->,loop left] (c1) to node[left]{$c_1$} (c1);

\draw[->,loop left] (true) to node[left=-1em,text width=1cm] {$*$: 1} (true);

\draw[->,loop left] (false) to node[text width=1cm,left=-1em] {$*$} (false);

\end{tikzpicture}
\caption{A picture of the canonical automaton from the proof of Theorem~\ref{thm:samplee}. All the weights are $0$ except from transitions from the state $q_T$}\label{fig:canonicaltwo}
\end{figure}
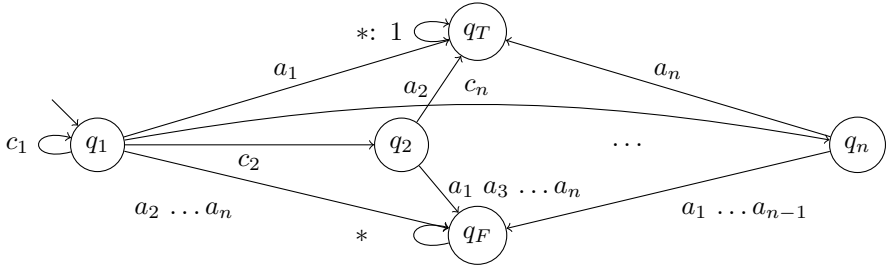
\end{proof}

The above proofs also work with some natural relaxations of the sample fitting problem.
 For example, if we only require the automaton to fit the samples up to some $\epsilon < \frac{1}{2}$, then the proofs still hold since we use only weights $0$ and $1$. 
Another relaxation for the $E$-samples case is to allow the automaton to give wrong values for some samples as long as the summarized probability of the examples with wrong value is less than some $\epsilon$.
However, since all the words are of the length at most three, the probability of $u \Sigma^{\omega}$ for each example $u$ is greater than 
$\frac{1}{(3n+2)^4}$ (recall that $|\Sigma| = 3n+2$), which means that for any $\epsilon<\frac{1}{(3n+2)^4}$ for every example some its extension 
must fit and hence the whole sample must fit.
 
\section{Active learning}
\label{sec:active}

In the active case, the learning algorithm can ask queries to an oracle, which is called \emph{the teacher}, which has a (hidden) function $\lang \colon \Sigma^{\omega} \to \R$ and answers two types of queries: 
\begin{itemize}
\item \emph{expectation queries}: given a finite word $u$, the teacher returns $\expected(\lang \mid u\Sigma^{\omega})$, 
\item \emph{$\epsilon$-consistency queries}: given an automaton $\aut$, if 
$\valueL{\aut}$  $\epsilon$-approximates $\lang$ (i.e., $\expected(|\lang - \valueL{\aut}|) \leq \epsilon$), the teacher returns YES, otherwise the teacher returns
a word $u$ such that $|\expected(\lang \mid u\Sigma^{\omega}) - \expected(\valueL{\aut} \mid u\Sigma^{\omega})| > \epsilon$. 
\end{itemize}

\noindent\emph{Remark}. 
Consider functions $\lang_1, \lang_2$ defined by $\flimavg$-automata.
If $\expected(|\lang_1 - \lang_2|) > \epsilon$, then due to Theorem~\ref{f:ExpectedLIMAVG}
there is a word $u\Sigma^{\omega}$ such that $\expected(|\lang_1 - \lang_2| \mid u\Sigma^{\omega}) > \epsilon$ and 
$\lang_1$ (resp., $\lang_2$) returns $\expected(\lang_1  \mid u\Sigma^{\omega})$ (resp,. $\expected(\lang_2  \mid u\Sigma^{\omega})$) 
on almost all words from $u\Sigma^{\omega}$. Therefore, 
$|\expected(\lang_1 \mid u\Sigma^{\omega}) - \expected(\lang_2 \mid u\Sigma^{\omega})| = \expected(|\lang_1 - \lang_2| \mid u\Sigma^{\omega}) > \epsilon$.

In the active learning case we consider two problems: approximate learning and rigid approximate learning. We first define approximate learning:

\begin{problem}[Approximate learning]
Given $\epsilon \in \Q^+ \cup \{0\}$ and a teacher with a (hidden) function $\lang$, 
construct a $\flimavg$-automaton $\aut$ such that $\valueL{\aut}$ $\epsilon$-approximates $\lang$ and
$\aut$ has the minimal number of state among such automata.
\end{problem}

\newcommand{\Pre}{\mathcal{Q}}
\newcommand{\Suff}{\mathcal{T}}

\newcommand{\neibr}{\mathbf{N}^G}
\newcommand{\lOneNorm}[1]{\left\lVert#1\right\rVert_{1}}

\newcommand{\C}{\mathcal{C}}
\newcommand{\D}{\mathcal{D}}

\subsection{Approximate learning}
We define a decision problem, called approximate minimization, which can be solved in polynomial-time
having a polynomial-time approximate learning algorithm. 

\begin{problem}[Approximate minimization]
Given a $\flimavg$-automaton $\aut$, $n \in \N$ and $\epsilon \in \Q^+$, \emph{the approximate minimization problem} asks whether 
there exists a $\flimavg$-automaton $\aut'$ with
at most $n$ states such that $\expected(|\valueL{\aut} - \valueL{\aut'}|) \leq \epsilon$.
\end{problem}

An efficient learning algorithm can be used to efficiently compute approximate minimization of a given $\flimavg$-automaton $\aut$;
we can run it and compute answers to queries of the learning algorithm in polynomial time in $|\aut|$~\cite{BaierBook}.
We show that the approximate minimization problem is 
$\NP$-complete, which means that approximate learning cannot be done in polynomial time if $P \neq NP$.

\begin{restatable}{theorem}{MinimalizationNPComp}
The approximate minimization problem is $\NP$-complete.
\label{th:approximateNPComp}
\end{restatable}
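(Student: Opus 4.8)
The plan is to establish membership in $\NP$ and $\NP$-hardness separately, the latter by a reduction from \satz{} that reuses the sample gadgets of Theorems~\ref{thm:sampleupw} and~\ref{thm:samplee}.

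For membership, the first task is to bound the size of a witness. We may assume $n$ is at most the number of states of $\aut$, since otherwise $\aut$ itself (contracted as after Theorem~\ref{f:ExpectedLIMAVG}) is a valid witness and the answer is ``yes''. By the same contraction observation, for any candidate $\aut'$ we may assume that each bottom SCC of $\aut'$ is a single state carrying a self-loop whose weight is the value of that SCC, and that only these values matter. Fix the transition structure of $\aut'$ (at most $n$ states), and write the value of each bottom SCC as a free rational parameter $\theta$. Then $\expected(|\valueL{\aut}-\valueL{\aut'}|)$ equals $\sum_B p_B\cdot|v_{\aut}(B)-v_{\aut'}(B)|$, where $B$ ranges over the bottom SCCs of the product of $\aut$ and $\aut'$, $p_B$ is the probability under $\uniformInf$ of reaching $B$, and $v_{\aut}(B),v_{\aut'}(B)$ are the values of $\aut,\aut'$ on $B$, which are constant by Theorem~\ref{f:ExpectedLIMAVG}. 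Since each product SCC projects to an $\aut'$-SCC, this objective is separable over the parameters $\theta$, and for each one it is a piecewise-linear convex function minimised at a weighted median of the finitely many SCC-values of $\aut$; hence there is an optimal $\aut'$ all of whose weights are values of bottom SCCs of $\aut$. These values solve the steady-state linear systems of $\aut$ and so have bit-size polynomial in $|\aut|$. Thus a minimal $\epsilon$-approximant of polynomial size exists: I would guess it and verify $\expected(|\valueL{\aut}-\valueL{\aut'}|)\le\epsilon$ in polynomial time by building the product Markov chain, solving the reachability system for the $p_B$, and evaluating the sum above.

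For hardness, the plan is to turn a \satz{} formula $\varphi=\bigwedge_{i=1}^n C_i$ into a fixed \flimavg{}-automaton $\aut$ that plays the role of the sample from Theorem~\ref{thm:samplee}, together with a threshold $\epsilon$. The clause constraints (corresponding to the labelled examples S1--S4) are realised on short, hence probabilistically heavy, prefixes, so that a mismatch there costs at least a constant gap; the truth value of each variable $x_i$, which the sample leaves free, is instead witnessed only behind a longer, probabilistically light prefix, on which $\aut$ takes a neutral value (e.g.\ $\frac12$) that an approximant may round either to $0$ or to $1$. Choosing $\epsilon$ strictly between the total mass of the $n$ light variable-regions and the mass of a single heavy clause-region, an $\epsilon$-close $\aut'$ of at most $n+2$ states is forced to match $\aut$ exactly on every clause-region; then, exactly as in the proof of Theorem~\ref{thm:samplee}, the rounding $\sigma(x_i)$ read off $\aut'$ on the light regions must satisfy $\varphi$. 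Conversely, a satisfying valuation yields, via the canonical automaton of Figure~\ref{fig:canonicaltwo}, an $\aut'$ with $n+2$ states that agrees with $\aut$ on all clause-regions and differs from it only on the light variable-regions, so $\expected(|\valueL{\aut}-\valueL{\aut'}|)\le\epsilon$.

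The main obstacle is precisely this mass separation in the forward direction: because the assignment is encoded in the values an approximant assigns to positive-probability regions, and no single $\aut$ can agree with the approximants of two different valuations there, those assignment-dependent regions must carry total probability below $\epsilon$, while every clause constraint must sit above $\epsilon$. With the depth-$2$/depth-$3$ layout of Theorem~\ref{thm:samplee} this fails quantitatively, since $n|\Sigma|^{-2}>|\Sigma|^{-3}$, so the gadget has to be re-engineered to push each variable witness to a strictly greater depth than the clause constraints -- making the total variable mass $O(n|\Sigma|^{-4})$, which lies below the clause mass $|\Sigma|^{-3}$ because $|\Sigma|=3n+2>n$ -- while arranging, through the composition of conditional expectations, that the clause-satisfaction test at the shallow level still reads the variable truth value stored at the deeper level. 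Verifying that this redesigned gadget preserves both completeness (``satisfiable $\Rightarrow$ small approximant'') and soundness (``small approximant $\Rightarrow$ satisfiable'') is the crux of the argument.
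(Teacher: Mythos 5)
Your NP-membership argument is essentially the paper's: guess an $n$-state automaton, build the product, compute the reachability probabilities $p_{C,D}$ of pairs of bottom SCCs and evaluate $\sum p_{C,D}\,|x_C-y_D|$ in polynomial time. Your additional weighted-median observation bounding the bit-size of the weights of an optimal approximant is a sensible supplement (the paper leaves the size of the guessed witness implicit), and that half is fine.

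The hardness half, however, has a genuine gap, and you have named it yourself without closing it. Your reduction from \satz{} via the gadget of Theorem~\ref{thm:samplee} is only a plan: the ``re-engineered'' gadget that pushes variable witnesses one level deeper is never constructed, and the crucial soundness direction --- that \emph{every} $(n+2)$-state automaton within expected error $\epsilon$ of the target encodes a satisfying valuation --- is exactly the step you defer. It is also the step most likely to fail: unlike sample fitting, the objective here is an expectation, so an approximant may trade error across regions. It can assign non-Boolean values on clause regions (paying only a small fraction of a region's mass rather than the full gap), it need not reach pairwise-distinct states after $c_1,\dots,c_n$, and it need not reproduce any of the exact structural constraints S1--S4 that the proof of Theorem~\ref{thm:samplee} exploits; showing that rounding such an approximant still yields a satisfying valuation is precisely the unresolved crux. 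The paper sidesteps this difficulty by reducing instead from the Binary $k$-Median Problem, whose summed $\ell_1$ objective matches the expected-difference objective exactly: vectors $\vec{v}_i$ are encoded as conditional expectations $\lang_{\C}(a_ia_jw)=\vec{v}_i[j]$ in a tree-like automaton, the $k$ successor states of the initial state play the role of the $k$ median vectors, and a padded variant (Modified BKMP, with the vectors $\vec{0}$ and $\vec{1}$ and alignment blocks) is used to force the structural properties of an optimal approximant (Boolean bottom-SCC values, exactly two bottom SCCs, no self-loop at the initial state) that your SAT-based gadget would have to force by hand. As it stands, your proposal establishes membership but not hardness.
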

\begin{proof}[Proof sketch]
The problem is contained in $\NP$ as we can non-deterministically pick an automaton with $n$ states and check whether it $\epsilon$-approximates $\aut$. 

For a vector $\vec{v} \in \R^m$ we define $\lOneNorm{\vec{v}} = \sum_{i=1}^m |\vec{v}[i]|$.
For $\NP$-hardness, consider the following problem:
\emph {Binary k-Median Problem (BKMP)}: given 
	numbers $n,m, k, t \in \N$ 
	and a set of Boolean vectors $\C = \{ \vec{v}_1, \ldots, \vec{v}_n \} \subseteq \{0,1\}^m$, 
decide whether there are vectors $\vec{u}_1, \ldots, \vec{u}_k \in \{0,1\}^m$ and a partition $\D_1, \ldots, \D_k $ of $\C$ such that 
	$\sum_{j=1}^k \sum_{\vec{v} \in \D_j} \lOneNorm{\vec{v} - \vec{u}_j} \leq t$?

BKMP has been shown $\NP$-complete in~\cite{BKMP}. To ease the reduction, we consider a variant of BKMP, called Modified BKMP, 
where we assume that $\vec{0},\vec{1}$ belong to the instance and the solution and some additional constraints.  
The Modified BKMP is also \NP-complete.

For $\C = \{ \vec{v}_1, \ldots, \vec{v}_n\} \subseteq \{0,1\}^{m}$ we define $\lang_{\C}$ over the alphabet $\Sigma = \{a_1, \ldots, a_{m}\}$ as follows:
for all $a_i, a_j \in \Sigma$, 
$w \in \Sigma^\omega$ 
\begin{itemize}
\item if $i \leq n$, we set $\lang_{\C}(a_ia_j w) = \vec{v}_i[j]$. 
\item if $i \in \{n + 1, m\}$, we set $\lang_{\C}(a_ia_j w) = \vec{v}_{n}[j]$. 
\end{itemize}
Intuitively, we select the vector with the first letter and the vector's component with the second letter.
This language can be defined with a tree-like $\flimavg$-automaton $\aut_{\C}$. 

We show that, if an instance $(\C,k,t)$ of Modified BKMP  has a solution $\vec{u}_1, \ldots, \vec{u}_k$ and  $\D_1, \ldots, \D_k$, then there exists an automaton $\aut$ with $k+1$ states that $\frac{t}{nm}$-approximates $\lang_{\C}$. Let $\vec{v}_1 = \vec{u}_1 = \vec{0}$ and
$\vec{v}_2 = \vec{u}_2 = \vec{1}$.
The automaton $\aut$ consists of the initial state $q_0$ and the successors of the initial state, $q_1, \ldots, q_k$, which correspond to vectors $\vec{u}_1, \ldots, \vec{u}_k$, i.e., 
$q_1$ is a bottom SCC of the value $0$, $q_2$ is a bottom SCC of the value $1$, and for $i > 2$
the successors of $q_i$ encode $\vec{u}_i$ via $\delta(q_i, a_j) = q_1$ if $\vec{u}_i[j] = 0$
and $\delta(q_i, a_j) = q_2$ otherwise. 
The successors of $q_0$ are defined based on the partition $\D_1, \ldots, \D_k $, i.e., if $\vec{v}_i$ belongs to $C_j$, then $\delta(q_0,a_i) = q_j$. Observe that $\aut$ $\epsilon$-approximates $\aut_{\C}$.

Conversely, consider $\aut'$ with $k+1$ states that $\frac{t}{nm}$-approximates $\lang_{\C}$.
We define vectors $\vec{p}_1, \ldots, \vec{p}_m \in \R^{m}$ such that $\vec{p}_i[j] = \expected(\aut' \mid a_ia_j \Sigma^{\omega})$. 
The structure of Modified BKMP, implies that 
the initial state of $\aut'$ has no self loops and hence it has at most $k$ different successor states.
Therefore, there are at most $k$ different vectors among $\vec{p}_1, \ldots, \vec{p}_m$.
Finally, we observe that since we consider $\lOneNorm{\cdot}$, w.l.o.g. we can assume that $\vec{p}_1, \ldots, \vec{p}_m \in \{0,1\}^m$.
Therefore, these vectors give us a solution to the instance $(\C,k,t)$ of Modified BKMP.
\end{proof}

\subsection{Rigid approximate learning}
One of the drawbacks of the standard approximation is that the counterexamples may be dubious, if not useless. We illustrate this with an example.

\begin{example}[Dubious counterexamples]
\label{ex:ambiguous}
\newcommand{\autDFA}{\mathcal{B}}
Consider a minimal DFA $\autDFA$ with $2^n$ states whose language consists of words of length $m$ for some $n < m$, and a word $v \in \Sigma^{n}$.
We define a function $\lang_{\autDFA,v} \colon \set{a,b}^{\omega} \to \R$ such that for all $u,w$
\begin{itemize}
\item $\lang_{\autDFA,v}(auw)=0$ if $|u| = n$ and $\autDFA$ accepts $u$, 
\item $\lang_{\autDFA,v}(auw)=0.3$ if $|u| = n$ and $\autDFA$ rejects $u$, 
\item $\lang_{\autDFA,v}(bw)$ is $0.4$ if the first occurrence of $v$ in $w$ is followed by $a$, and $1$ otherwise.
\end{itemize}

Fix $\epsilon=0.1$.
Observe that 
$\lang_{\autDFA,v}$ can be $0.1$-approximated with an automaton $\aut$, which is faithful to $\lang_{\autDFA,v}$ on $b\Sigma^{\omega}$ and returns $0.15$ for all other words. 
$\aut$ has $n + O(1)$ states.

Assume that the teacher gives only counterexamples starting with $a$, and hence $\epsilon$-consistency queries do not give any information about 
the values of words starting with $b$. The teacher can do it as long as the algorithm does not know the whole $\autDFA$, which takes $\Omega(|\autDFA|)$ queries to learn.
Yet even if the algorithm learns the whole $\autDFA$ and returns the $0.7$ for the words starting with $b$, the expected difference is $0.5 \cdot 0.3 = 0.15$. 
It follows that to learn the approximation, the algorithm needs to learn something about $v$. 

Suppose that the algorithm did not learn the whole $\autDFA$. Then, to learn something non-trivial about words starting with $b$, 
it has to ask an expectation query containing $v$. Since the learning algorithm is deterministic, 
it asks the same expectation queries for different words $v$.
Therefore, for every learning algorithm there are words $v$ that can be learned only after asking queries
of the total length  $2^{\Omega(|v|)}$. 

It follows that any learning algorithm has to ask queries of total length $\Omega(|\autDFA|)$ or 
$2^{\Omega(|v|)}$,  which totals to $2^{\Omega(n)}$.
\end{example}

In Example~\ref{ex:ambiguous} we assumed an antagonistic teacher, which misleads the algorithm on purpose.
But even with a stochastic teacher, 
it is not known whether ``fixing'' a given random counterexample is a step towards better approximation. 
To resolve this issue, we consider a stronger notion of approximation, called \emph{rigid approximation}, where 
we require all conditional expected values to be $\epsilon$-close, i.e., for all words $u \in \Sigma^*$  
we have $\expected(|\lang_1 - \lang_2| \mid u\Sigma^{\omega}) \leq \epsilon$.
In this framework counterexamples are certain, i.e., if for some $u \in \Sigma^*$, the expectation over $u \Sigma^{\omega}$ is more than $\epsilon$ off the intended value, 
it has to be modified. Formally, we define the problem as follows:

\begin{problem}[Rigid approximate learning]
Given $\epsilon \in \Q^+$ and a teacher with a (hidden) function $\lang$, 
construct an automaton $\aut$  
such that $\valueL{\aut}$ is a rigid $\epsilon$-approximation of $\lang$ and
$\aut$ has the minimal number of state among such automata.
\end{problem}

Even though the counterexamples are certain in this framework, we observe that this does not eliminate ambiguity.
For instance, there can be multiple automata with the minimal number of states.

\begin{example}[Non-unique minimalization]
Consider the automaton $\aut_1$ depicted in Figure~\ref{fig:manyMinimalAutomata} and $\epsilon = \frac{1}{4}$. 
Any automaton $\aut$, which is a rigid $\epsilon$-approximation of $\valueL{\aut_1}$, has at least two bottom SCCs and hence it requires at least 3 states.
Therefore the automata $\aut_2, \aut_3$ depicted in Figure~\ref{fig:manyMinimalAutomata}, which $\epsilon$-approximate $\valueL{\aut_1}$, have the minimal number of states.
This shows that there can be multiple correct answers to in the rigid approximate learning problem. 

Based on this example, we construct the automaton $\aut_{\textrm{exp}}$ parametrized by $n \in \N$ depicted in Figure~\ref{fig:manyMinimalAutomata},  which has $O(n)$ states and there are 
exponentially many (in $n$) minimal non-equivalent automata, which are rigid $\frac{1}{8n}$-approximations of $\valueL{\aut_{\textrm{exp}}}$.
\begin{figure}
\begin{tikzpicture}
[state/.style={draw,circle}]
\newdimen\nodeDist
\nodeDist=6mm

\node[state] (Src) at (0,0) {};
\node[state] (A) at (-0.9,-1) {};
\node[state] (B) at (0,-1) {};
\node[state] (C) at (0.9,-1){};

\node at (-0.7,0) {$\aut_1$:};

\draw[->] (Src) to node[left]{a} (A);
\draw[->] (Src) to node[left]{b} (B);
\draw[->] (Src) to node[left]{c} (C);

\draw[->,loop below] (A) to node[below] {0} (A);
\draw[->,loop below] (B) to node[below] {0.5} (A);
\draw[->,loop below] (C) to node[below] {1} (A);

\begin{scope}[xshift=3.0cm,,yshift=0.5cm]
\node at (-0.7,-0.0) {$\aut_2$:};

\node[state] (Src) at (0,0) {};
\node[state] (A1) at (-0.6,-1) {};
\node[state] (B1) at (0.6,-1) {};

\draw[->,bend right] (Src) to node[left]{a} (A1);
\draw[->,bend left] (Src) to node[right]{b} (A1);
\draw[->] (Src) to node[right]{c} (B1);

\draw[->,loop below] (A1) to node[left] {0.25} (A);
\draw[->,loop below] (B1) to node[left] {1} (A);
\end{scope}

\begin{scope}[xshift=3.0cm,yshift=-1.5cm]
\node at (-0.7,-0.1) {$\aut_3$:};
\node[state] (Src) at (0,0) {};
\node[state] (A2) at (-0.6,-1) {};
\node[state] (B2) at (0.6,-1) {};

\draw[->] (Src) to node[left]{a} (A2);
\draw[->,bend right] (Src) to node[left]{b} (B2);
\draw[->,bend left] (Src) to node[right]{c} (B2);

\draw[->,loop below] (A2) to node[left] {0} (A);
\draw[->,loop below] (B2) to node[left] {0.75} (A);
\end{scope}

\begin{scope}[xshift=9cm,yshift=+1.6cm]

\node at (-1.4,-1.2) {$\aut_{\textrm{exp}}$:};

\draw (0,-1) -- (3.2,-2.5) -- (-2.8,-2.5) -- cycle;

\node at (1.2,-3.5) {$\dots$};

\begin{scope}[xshift=-2.8cm,yshift=-0.5cm]
\node[state] (Src) at (0,-2.2) {};
\node[state] (A3) at (-0.9,-3.2) {};
\node[state] (B3) at (0,-3.2) {};
\node[state] (C3) at (0.9,-3.2){};

\draw[->] (Src) to node[left]{a} (A3);
\draw[->] (Src) to node[left]{b} (B3);
\draw[->] (Src) to node[left]{c} (C3);

\draw[->,loop below] (A3) to node[below] {0} (A);
\draw[->,loop below] (B3) to node[below] {$\frac{1}{4n}$} (A);
\draw[->,loop below] (C3) to node[below] {$\frac{2}{4n}$} (A);
\end{scope}

\begin{scope}[xshift=-0.6cm,yshift=-0.5cm]
\node[state] (Src) at (0,-2.2) {};
\node[state] (A3) at (-0.9,-3.2) {};
\node[state] (B3) at (0,-3.2) {};
\node[state] (C3) at (0.9,-3.2){};

\draw[->] (Src) to node[left]{a} (A3);
\draw[->] (Src) to node[left]{b} (B3);
\draw[->] (Src) to node[left]{c} (C3);

\draw[->,loop below] (A3) to node[below] {$\frac{4}{4n}$} (A);
\draw[->,loop below] (B3) to node[below] {$\frac{5}{4n}$} (A);
\draw[->,loop below] (C3) to node[below] {$\frac{6}{4n}$} (A);
\end{scope}

\begin{scope}[xshift=3.2cm,yshift=-0.5cm]
\node[state] (Src) at (0,-2.2) {};
\node[state] (A3) at (-0.9,-3.2) {};
\node[state] (B3) at (0,-3.2) {};
\node[state] (C3) at (0.9,-3.2){};

\draw[->] (Src) to node[left]{a} (A3);
\draw[->] (Src) to node[left]{b} (B3);
\draw[->] (Src) to node[left]{c} (C3);

\draw[->,loop below] (A3) to node[below] {$\frac{4n-4}{4n}$} (A);
\draw[->,loop below] (B3) to node[below] {$\frac{4n-3}{4n}$} (A);
\draw[->,loop below] (C3) to node[below] {$\frac{4n-2}{4n}$} (A);
\end{scope}

\end{scope}
\end{tikzpicture}
 \caption{The automaton $\aut_1$ approximated by two minimal non-equivalent automata and the automaton $\aut_{\textrm{exp}}$ approximated by exponentially many non-equivalent automata}
\label{fig:manyMinimalAutomata}
\end{figure}
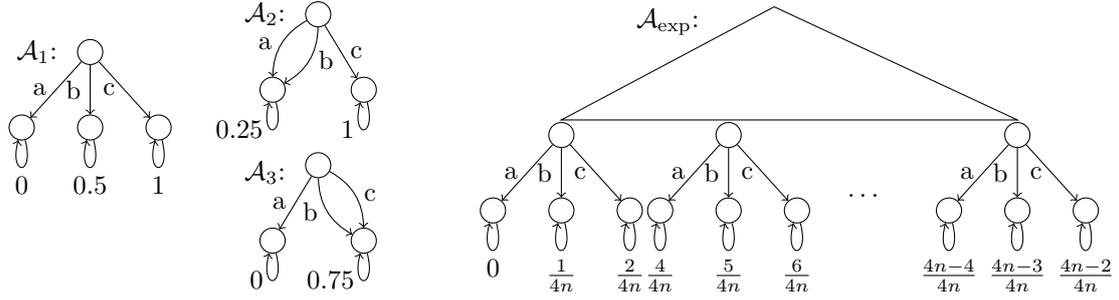
\end{example}

As in the approximate learning case, efficient rigid approximate learning enables us to solve efficiently the following \emph{rigid approximate minimization} problem.

\begin{problem}[Rigid approximate minimization]
Given a $\flimavg$-automaton $\aut$, $n \in \N$ and $\epsilon \in \Q^+$, construct a $\flimavg$-automaton $\aut'$ with
at most $n$ states such that 
for all words $u \in \Sigma^*$ we have $\expected(|\valueL{\aut} - \valueL{\aut'}| \mid u\Sigma^{\omega}) \leq \epsilon$.
\end{problem}

First, we consider a naive approach to solve the rigid approximate minimization problem based on state merging. 
We start with an input automaton $\aut$, merge its states to maintain the property that the automaton with merged states $\aut'$
rigidly $\epsilon$-approximates $\valueL{\aut}$. We terminate if the automaton is \emph{minimal}, i.e., merging any two states of $\aut'$ violates the property.
However, it may happen that $q_1$ can be merged either with $q_2$ or $q_3$, but not with both. 
We show in the following example that the choice of merged states can have profound impact on the size of a minimal automaton.

\begin{example}[Minimal automata of different size]
Assume $n \in \N$ and the function $\lang$ that returns $0$ on words from $a\Sigma^na\Sigma^{\omega}$, $1$ on $b\Sigma^nb\Sigma^{\omega}$, and $0.5$ otherwise.
Let $\aut$ be a minimal automaton defining such a language, as depicted in Figure~\ref{fig:minimalAutomataDifferentSize}.

Let $\epsilon=\frac{1}{4}$. 
To minimize $\aut$, we can merge it to the automaton $\aut_{S}$ with 3 states or to $\aut_{L}$ with $n+3$ states.
Observe that $\aut_{S}$ and $\aut_{L}$ are rigid $\epsilon$-approximations of $\valueL{\aut}$.
For 
$\aut_S$, it is because for every word $au$ we have $\expected(\aut_B \mid au\Sigma^{\omega})\in\set{0,0.5}$ and hence it is $\epsilon$-close to $0.25$, and similarly for $ba$.
For $\aut_L$, it is because for every $u$ of length at least $n+2$  the expected value of $\aut_L$ is $0.75$ or $0.25$ depending on the $(n+2)$th letter of $u$ whereas for $\aut$ it is in $\set{0.5, 1}$ or in $\set{0, 0.5}$, resp. 
For shorter $u$, we simply observe that the difference of expected values w.r.t. to a word does not exceed the maximal difference in its suffixes.

The automaton $\aut_L$ is minimal, as there are no states that can be merged. Therefore, the difference and even the ratio between the sizes of both automata are unbounded.
\begin{figure}
\begin{tikzpicture}
[state/.style={draw,circle}]
\newdimen\nodeDist
\nodeDist=6mm

\node (label1) at (-0.7,0) {$\aut$:};
\node (label2) at (6.3,1) {$\aut_L$:};
\node (label3) at (6.3,-1) {$\aut_S$:};

\node[state] (Src) at (0,0) {};

\node[state] (L1) at (1,1) {};
\node[state] (L2) at (2,1) {};
\node (dots) at (3, 1) {\dots};
\node[state] (LN) at (4,1) {};

\node[state] (R1) at (1,-1) {};
\node[state] (R2) at (2,-1) {};
\node (dots) at (3, -1) {\dots};
\node[state] (RN) at (4,-1) {};

\draw[->] (Src) to node[above] {a} (L1);
\draw[->] (L1) to node[above] {*} (L2);
\draw[->] (L2) to node[above] {*} (2.7, 1);
\draw[->] (3.3, 1) to node[above] {*} (LN);

\draw[->] (Src) to node[above] {b} (R1);
\draw[->] (R1) to node[above] {*} (R2);
\draw[->] (R2) to node[above] {*} (2.7, -1);
\draw[->] (3.3, -1) to node[above] {*} (RN);

\node[state] (V1) at (5,1.5) {};
\node[state] (V2) at (5,0.5) {};
\node[state] (V3) at (5,-0.5) {};
\node[state] (V4) at (5,-1.5) {};

\draw[->,loop below] (V1) to node[right] {0} (V1);
\draw[->,loop below] (V2) to node[right] {0.5} (V2);
\draw[->,loop below] (V3) to node[right] {0.5} (V3);
\draw[->,loop below] (V4) to node[right] {1} (V4);

\draw[->] (LN) to node[above] {a} (V1);
\draw[->] (LN) to node[above] {b} (V2);
\draw[->] (RN) to node[above] {a} (V3);
\draw[->] (RN) to node[above] {b} (V4);

\begin{scope}[xshift=7cm,yshift=1cm]
\node[state] (Src) at (0,0) {};

\node[state] (L1) at (1,0) {};
\node[state] (L2) at (2,0) {};
\node[state] (LN) at (4,0) {};

\draw[->] (Src) to node[above] {*} (L1);
\draw[->] (L1) to node[above] {*} (L2);
\draw[->] (L2) to node[above] {*} (2.7, 0);
\draw[->] (3.3, 0) to node[above] {*} (LN);

\node[state] (V1) at (5, 0.5) {};
\node[state] (V2) at (5, -0.5) {};

\draw[->,loop below] (V1) to node[right] {0.25} (V1);
\draw[->,loop below] (V2) to node[right] {0.75} (V2);

\draw[->] (LN) to node[above] {a} (V1);
\draw[->] (LN) to node[above] {b} (V2);

\end{scope}

\begin{scope}[xshift=7cm,yshift=-1cm]

\node[state] (Src3) at (0,-0.) {};

\node[state] (A1) at (1, 0.5) {};
\node[state] (A2) at (1, -0.5) {};

\draw[->] (Src3) to node[above] {a} (A1);
\draw[->] (Src3) to node[above] {b} (A2);

\draw[->,loop below] (A1) to node[right] {0.25} (V13);
\draw[->,loop below] (A2) to node[right] {0.75} (V23);

\end{scope}

\end{tikzpicture}

 \caption{Minimal automata of different size}
\label{fig:minimalAutomataDifferentSize}
\end{figure}
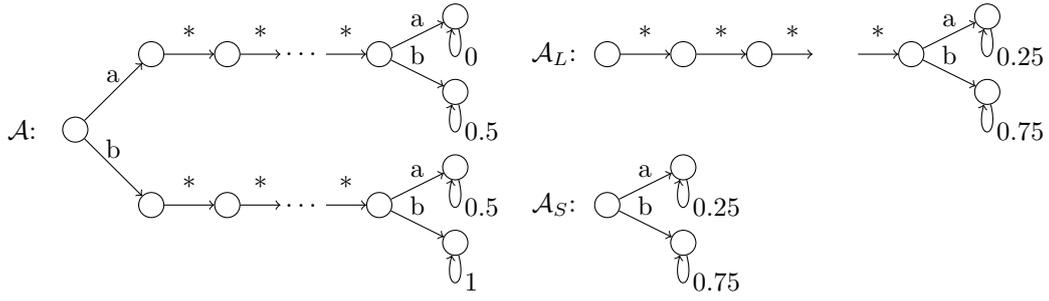
\end{example}

We show that the rigid approximate minimization problem is $\NP$-complete, which implies that there is no polynomial-time 
rigid approximate learning algorithm (unless $P = NP$). 

\begin{restatable}{theorem}{MinimalRigidNPComp}
The rigid approximate minimization problem is $\NP$-complete.
\end{restatable}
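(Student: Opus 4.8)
The plan is to establish membership in $\NP$ and $\NP$-hardness separately.

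First I would show the problem lies in $\NP$ by guessing a candidate automaton $\aut'$ with at most $n$ states and verifying the rigid approximation condition in polynomial time. The key observation is that, although the rigid condition quantifies over the infinitely many finite words $u$, the quantity $\expected(|\valueL{\aut} - \valueL{\aut'}| \mid u\Sigma^\omega)$ depends only on the pair of states reached by $u$ in $\aut$ and in $\aut'$, i.e., on the corresponding state of the product automaton $\aut \times \aut'$. Hence it suffices to check the inequality for each of the polynomially many reachable product states. For a fixed product state, Theorem~\ref{f:ExpectedLIMAVG} guarantees that almost every continuation reaches a bottom SCC of the product; on such an SCC both automata sit in their contracted bottom SCCs with constant values, so $|\valueL{\aut} - \valueL{\aut'}|$ is constant there, and the conditional expectation is the average of these constants weighted by bottom-SCC reachability probabilities, which is computable in polynomial time as in~\cite{BaierBook}. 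Guessing $\aut'$ with polynomially many states and weights of polynomial bit-length and checking all these inequalities places the problem in $\NP$.

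For $\NP$-hardness I would reduce from graph $k$-colorability (equivalently, a covering variant of the Binary $k$-Median Problem used for Theorem~\ref{th:approximateNPComp}). The essential difference from the approximate setting is that the rigid condition is a \emph{per-point radius} constraint rather than an aggregate sum, so it corresponds to a $k$-center/covering problem rather than $k$-median. Given a graph $G=(V,E)$, I would build a tree-like target automaton $\aut_G$ whose initial state branches on a first letter $a_i$ (one per vertex) into a profile state $s_i$, and from $s_i$ a second letter $a_j$ leads to a contracted bottom SCC of value $\vec{v}_i[j]$. The profiles $\vec{v}_i$ are designed with one coordinate per edge, taking the values $0$ and $1$ at the two endpoints and an intermediate value at every other vertex, so that two profiles lie within $\ell_\infty$-distance $2\epsilon$ exactly when the corresponding vertices are non-adjacent. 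Because the sink values are exact, the rigid condition evaluated at the length-two prefixes $a_i a_j$ forces, for the state into which $s_i$ is merged, a single successor value within $\epsilon$ of $\vec{v}_i[j]$ in every coordinate $j$; thus a set of profile states is mergeable into one state of the approximant iff the corresponding vertices form an independent set, and the coarser $\ell_1$ conditions at the prefixes $a_i$ and at the empty prefix are then automatically satisfied. Consequently the minimum number of non-sink states of a rigid $\epsilon$-approximation equals $\chi(G)$ up to an additive constant, so bounding the number of states decides $k$-colorability.

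The main obstacle is making this correspondence exact while controlling the total state count. On the hardness side I must pin down the auxiliary sink states: each representative state of the approximant sends every letter to a sink carrying a coordinate-wise center value, and I must ensure these center values range over a fixed small set independent of the solution, so that ``at most $k+c$ states'' cleanly captures ``$\chi(G) \le k$''. This requires choosing the intermediate profile value and $\epsilon$ (close to $\frac{1}{2}$) so that a valid common center always exists among a constant number of rationals, and arguing, as in the proof of Theorem~\ref{th:approximateNPComp}, that without loss of generality the approximant uses these canonical center values rather than gaining an advantage from other fractional weights. I also need the converse direction, that any $(k+c)$-state rigid approximant induces a proper $k$-coloring, which follows because the rigid constraint at the length-two prefixes forces each cluster to be an independent set. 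Verifying these two points — the exact state accounting and the absence of advantageous non-canonical weights — is where the real work lies; the membership argument and the overall reduction skeleton are routine given Theorem~\ref{f:ExpectedLIMAVG}.
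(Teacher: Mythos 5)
Your membership argument coincides with the paper's: both replace the universal quantification over prefixes $u$ by finitely many checks, one per pair of states of $\aut$ and $\aut'$ reachable by a common word, and evaluate each conditional expectation via bottom-SCC values and reachability probabilities exactly as in the $\NP$ test for Theorem~\ref{th:approximateNPComp}. For hardness you follow the same architecture as the paper (a depth-two tree-like target automaton, $\epsilon$ around $\frac{1}{4}$, two sinks with values $0.25$ and $0.75$, and the successors of the initial state acting as cluster centers; your remark that rigidity turns the $k$-median-style aggregate of Theorem~\ref{th:approximateNPComp} into a per-point $k$-center constraint is precisely why the paper switches from BKMP to its intermediate ``$\frac{1}{4}$-vector cover'' problem), but you use a genuinely different source problem: the paper reduces dominating set (on graphs without short cycles) to the vector cover problem, whereas you encode graph colorability directly, with one coordinate per edge taking values $0$ and $1$ at its endpoints and $\frac{1}{2}$ elsewhere. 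Your reduction is sound, but its correctness hinges on a fact you should state explicitly: for axis-parallel ($\ell_\infty$) balls, pairwise intersection implies a common point (per coordinate, values spanning at most $2\epsilon$ admit a midpoint center), so a set of profiles has a common $\epsilon$-center iff they are pairwise within $2\epsilon$, i.e., iff the corresponding vertices are pairwise non-adjacent; this is what makes ``mergeable cluster $=$ independent set'' exact, and it has no counterpart in the paper's dominating-set version. The step you defer is also the one that actually needs care: a bound of $k+c$ states only certifies $\chi(G)\le k$ if at most $k$ states can occur as successors of the initial state, and the paper obtains this from the observation that $\expected(|\lang^{u_1u}-\lang^{u_2u}|)>2\epsilon$ forces $u_1$ and $u_2$ to reach different states of $\aut'$, which separates the initial state and the (at least two) bottom-SCC states from the cluster states. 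In your instance this exclusion has boundary issues at $\epsilon=\frac{1}{4}$ (for example, a vertex whose profile lies in $\{0,\frac{1}{2}\}^E$ is $\frac{1}{4}$-covered by a constant sink, and a self-loop at the initial state is not immediately contradictory), so you must perturb $\epsilon$ or pad the profiles and then carry out the accounting; once that is done, your coloring-based reduction yields the theorem by a more transparent hardness source, at the cost of needing the Helly-type argument and the extra bookkeeping that the paper's vector-cover formulation packages into its intermediate problem.
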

\begin{proof}[Proof sketch]
The rigid approximate minimization problem is in $\NP$. We show that it is $\NP$-hard.
We define a problem, which is an intermediate step in our reduction. 

Given $n,k \in \N$ and vectors $\C = \set{\vec{v}_1, \ldots, \vec{v}_n } \subseteq \{0, \frac{1}{2}, 1\}^n$, \emph{the $\frac{1}{4}$-vector cover problem} asks
whether there exist $\vec{u}_1, \ldots, \vec{u}_k \in \R^{n}$ such that for every vector $\vec{v} \in \C$ there is $j$ such that 
$\lInfNorm{\vec{v} - \vec{u}_j} \leq \frac{1}{4}$?

The $\frac{1}{4}$-vector cover problem is \NP-complete; the \NP-hardness is via reduction from the dominating set problem. 
We show that the $\frac{1}{4}$-vector cover problem reduces to the rigid approximate minimization problem

Let  $\C = \set{ \vec{v}_1, \ldots, \vec{v}_n } \subseteq \{0, \frac{1}{2}, 1\}^n$.
We define $\lang_{\C}$ over the alphabet $\Sigma = \{a_1, \ldots, a_{m}\}$ such that for all $a_i, a_j \in \Sigma$, 
$w \in \Sigma^\omega$ we have $\lang_{\C}(a_ia_j w) = \vec{v}_i[j]$. Such $\lang_{\C}$ can be defined with a 
a tree-like $\flimavg$-automaton $\aut_{\C}$.

Let $\epsilon = \frac{1}{4}$. We show that an instance $n,k,\C$ 
of the $\frac{1}{4}$-vector cover problem has a solution  if and only if $\aut_{\C}$ has a rigid $\epsilon$-approximation with $k+3$ states.

Assume that there is $\aut'$ with $k+3$ states such that for all words $u \in \Sigma^*$ we have $\expected(|\valueL{\aut_{\C}} - \valueL{\aut'}| \mid u\Sigma^{\omega}) \leq \epsilon$.
Observe that the initial state of $\aut'$ has at most $k$ different successors.
To see that consider functions $\lang^{u}(w) = \valueL{\aut_{\C}}(u w)$ defined for $u \in \Sigma^*$. 
If for some $u_1, u_2, u \in \Sigma^*$ we have $\expected(|\lang^{u_1 u} - \lang^{u_2 u}|) > 2 \epsilon$, then words $u_1$ and $u_2$ lead (from the initial state) to different states in $\aut'$.
It follows that $\aut'$ has at least $3$ states that are no successors of the initial state: the initial state and
(at least two) states that correspond to bottom SCCs.
We define vectors $\vec{u}_1, \ldots, \vec{u}_k$ from the successors of the initial state of $\aut'$. Formally,
for $i \in \{1, \ldots, n\}$ we define a vector $\vec{y}_i$ as $\vec{y}_i[j] = \expected(\valueL{\aut'} \mid a_ia_j \Sigma^{\omega})$. 
There are at most $k$ different vectors among $\vec{y}_1, \ldots, \vec{y}_n$ and we take these distinct vectors as $\vec{u}_1, \ldots, \vec{u}_k$.
The condition $\expected(|\valueL{\aut_{\C}} - \valueL{\aut'}| \mid u\Sigma^{\omega}) \leq \epsilon$ implies that 
$\vec{u}_1, \ldots, \vec{u}_k$ form a solution to $n,k,\C$.

Conversely, assume that the instance $n,k,\C$ has a solution. Observe that w.l.o.g. we can assume that
the solution vectors $\vec{u}_1, \ldots, \vec{u}_k$ belong to $\set{0.25,0.75}^n$.
Based on this solution we define $\aut'$ with $k+3$ states 
$q_0, q_1, \ldots, q_k, s_1, s_2$ such that $q_0$ is initial, $q_1, \ldots, q_k$ are successors of $q_0$, and $s_1, s_2$ are single-state bottom SCC  of the values $0.25$ and $0.75$.
The states $q_1, \ldots, q_k$ correspond to vectors $\vec{u}_1, \ldots, \vec{u}_k$, i.e., the successors of $q_i$ encode $\vec{u}_i$ via $\delta(q_i, a_j) = s_1$ if $\vec{u}_i[j] = 0.25$
and $\delta(q_i, a_j) = s_2$ otherwise. 
The successors of $q_0$ are defined based on the matching from the vector cover, i.e., if $\delta(q_0,a_i) = q_j$, then $\vec{u}_{j}$ is $\frac{1}{4}$-close to $\vec{v}_i$.
Observe that $\aut'$ is a rigid $\frac{1}{4}$-approximation of ${\aut_{\C}}$.

\end{proof}

\newcommand{\eqAlmostSure}{\equiv^0_\lang}

\section{Almost-exact learning}
The almost-exact learning the minimal automaton is defined as follows.

\begin{problem}[Almost-exact learning]
Given a teacher with a (hidden) function $\lang$, 
construct a $\flimavg$-automaton $\aut$ such that $\valueL{\aut}$ is a $0$-approximation of $\lang$.
\end{problem}

Notice that for functions $\lang$ and $\valueL{\aut}$, the following conditions are equivalent:
\begin{AutoMultiColItemize}
\item $\valueL{\aut}$ is a $0$-approximation of $\lang$.
\item $\valueL{\aut}$ is a rigid $0$-approximation of $\lang$.
\item $\prob(\set{ w : \valueL{\aut}(w) \neq \lang(w)}){=}0$.
\item $\prob(\set{ w : \valueL{\aut}(uw) \neq \lang(uw)}){=}0$ for each $u$.
\end{AutoMultiColItemize}

We show that there is a polynomial-time algorithm for almost-exact learning. 
\begin{theorem}
The almost-exact learning problem for $\flimavg$-automata can be solved in polynomial time in the size of the minimal automaton that is almost equivalent with the target function $\lang$ and 
the maximal length of counterexamples returned by the teacher. 
\label{th:almostSure}
\end{theorem}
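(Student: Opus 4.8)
The plan is to adapt Angluin's $L^*$ algorithm~\cite{angluin1987learning}, using the expectation query as a ``membership'' oracle for the rational function $g(u) := \expected(\lang \mid u\Sigma^{\omega})$ and the $0$-consistency query as the equivalence oracle. First I would reduce almost-exact learning to realizing $g$ exactly. Let $\aut^\star$ be the minimal almost-equivalent automaton and let $N=|\aut^\star|$. By the equivalence of the listed conditions, $\aut^\star$ and $\lang$ agree outside a set of measure $0$, and since such a set meets every basic cylinder $u\Sigma^{\omega}$ in measure $0$, we get $g(u) = \expected(\valueL{\aut^\star}\mid u\Sigma^{\omega})$ for \emph{every} $u$; by Theorem~\ref{f:ExpectedLIMAVG} this value depends only on the state reached after reading $u$. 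Hence $g$ has the right congruence $u_1 \equiv u_2 \iff \forall v\; g(u_1 v) = g(u_2 v)$ of index at most $N$. Two further facts, both from the first-letter decomposition and Theorem~\ref{f:ExpectedLIMAVG}, will drive the construction: the averaging identity $g(u) = \frac{1}{|\Sigma|}\sum_{a\in\Sigma} g(ua)$, and the fact that a class whose residual $v\mapsto g(uv)$ is constant corresponds to a bottom SCC carrying that constant as its value (so distinct bottom SCCs of equal value are identified, consistently with the contraction remark after Theorem~\ref{f:ExpectedLIMAVG}).

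I would then maintain an observation table with a prefix-closed set of access words $\Pre$ and an experiment set $\Suff \ni \varepsilon$, filling entries $g(uv)$ by expectation queries, and close it and make it consistent exactly as in $L^*$, where each repair either increases the number of distinct rows (bounded by $N$) or adds an experiment to $\Suff$. From a closed, consistent table I build a hypothesis $\aut_H$: its states are the distinct rows, $\delta([u],a)=[ua]$, the initial state is $[\varepsilon]$; I compute the bottom SCCs of this graph, set every transition inside a bottom SCC to the constant value of that class and all remaining weights to $0$. Since transient weights vanish in the limit average and, by Theorem~\ref{f:ExpectedLIMAVG}(2), almost every word is absorbed into a bottom SCC, the map $u \mapsto \expected(\valueL{\aut_H}\mid u\Sigma^{\omega})$ is the unique solution $E$ of the linear system $E(q) = \frac{1}{|\Sigma|}\sum_{a} E(\delta(q,a))$ with $E$ fixed to the value on bottom SCCs (uniqueness follows because every state reaches a bottom SCC). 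Using $\varepsilon\in\Suff$ and the averaging identity, $E$ and $g$ satisfy the same system with the same boundary, so $E([u]) = g(u)$ on all access words, and once the table captures the full congruence, $\aut_H$ realizes $g$ and thus, by Theorem~\ref{f:ExpectedLIMAVG}(1), agrees with $\lang$ on almost all words.

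Next I would submit $\aut_H$ to a $0$-consistency query. On YES the algorithm halts. On a counterexample $u$ we have $g(u) \neq \expected(\valueL{\aut_H}\mid u\Sigma^{\omega}) = E([u])$, so $\aut_H$ misclassifies $u$, and I process it by a binary search in the style of Rivest--Schapire. Writing $u = a_1\cdots a_\ell$ and letting $s_i\in\Pre$ be the access word of the state reached after $a_1\cdots a_i$, the quantities $\gamma_i := g(s_i\,a_{i+1}\cdots a_\ell)$ satisfy $\gamma_\ell = g(s_\ell) = E([u])$ and $\gamma_0 = g(u)$, hence $\gamma_0 \neq \gamma_\ell$ and some breakpoint $i$ has $\gamma_i \neq \gamma_{i+1}$. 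The suffix $a_{i+2}\cdots a_\ell$ then separates $s_i a_{i+1}$ from $s_{i+1}$, two words the current table wrongly identifies; adding it to $\Suff$ strictly increases the number of classes, and locating the breakpoint costs $O(\log \ell)$ expectation queries.

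Finally, for the complexity bound: each counterexample strictly increases the number of distinct rows, which never exceeds $N$, so there are at most $N$ equivalence queries; between them, closing and consistency add at most $N$ rows and at most $N$ experiments, and a counterexample of length at most $M$ is analyzed with $O(\log M)$ queries. All values are rationals of bit-size polynomial in $N$, so comparisons are exact, and the total running time is polynomial in $N$ and $M$. I expect the main obstacle to be the second paragraph: establishing that $\expected(\valueL{\aut_H}\mid u\Sigma^{\omega})$ equals the row value $g(u)$ on access words --- i.e., that routing almost all words to the correct bottom-SCC value reconstructs $g$ via the unique solution of the harmonic system --- and that the bottom SCCs of $\aut_H$ are precisely the classes with constant residual, so the self-loop weights are well defined. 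This is exactly where Theorem~\ref{f:ExpectedLIMAVG} and the contraction remark carry the argument.
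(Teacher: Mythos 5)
Your proposal is correct and follows essentially the same route as the paper: an adaptation of Angluin's $L^*$ with expectation queries as membership queries and $0$-consistency queries as equivalence queries, built on a Myhill--Nerode-style right congruence of index bounded by the minimal almost-equivalent automaton, with hypothesis automata whose bottom SCCs carry the conditional expected values. Your only deviations are cosmetic: you phrase the congruence via equality of conditional expectations $g(u_1v)=g(u_2v)$ rather than almost-sure equality of the residual functions (these coincide here since $\lang$ is almost equivalent to a bounded automaton function), and you process counterexamples by Rivest--Schapire binary search instead of the paper's prefix-splitting that adds both an access word and a test word.
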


\begin{proof}
We define a relation $\eqAlmostSure$ on finite words $u,v \in \Sigma^*$ as follows
\[ u \eqAlmostSure v \text{ if and only if }
\prob(\set{ w : \lang(uw) \neq \lang(vw)}) = 0
\]

A quick check shows that the relation $\eqAlmostSure$ is an equivalence relation.
We show that $\eqAlmostSure$ is a right congruence, i.e., if $u \eqAlmostSure v$, then for all $a$ we have $ua \eqAlmostSure va$.
Indeed, consider $X_1 = \set{ w : \lang(uaw) \neq \lang(vaw)}$ and $X_2 = \set{ w : \lang(uw) \neq \lang(vw)}$.
Note that $u \eqAlmostSure v$ implies $\prob(X_2) = 0$.
For all $w$, if $w \in X_1$, then $a w \in X_2$. It follows that (under the uniform distribution) 
$\prob(X_1) \leq \frac{1}{|\Sigma|} \prob(X_2) = 0$. Thus, $\prob(X_1) =0$ and $ua \eqAlmostSure va$.

We now show a counterpart of the Myhill–Nerode theorem: there is a $\flimavg$-automaton of $n$ states defining almost-exactly $\lang$ if and only if the index of $\eqAlmostSure$ is $n$. 

If $\aut$ defines almost-exactly $\lang$, then the index of $\eqAlmostSure$ is bounded by the number of states of $\aut$.
Indeed, if for words $u,v$ the automaton $\aut$ starting from the initial state ends up in the same state, then
for all words $w$ we have $\lang(uw) =  \lang(vw)$ and hence $u \eqAlmostSure v$. 
Conversely, assume that $\eqAlmostSure$ has a finite index. Then, we construct a $\flimavg$-automaton $\aut_{\eqAlmostSure}$ corresponding to $\eqAlmostSure$.
The states of $\aut_{\eqAlmostSure}$ are the equivalence classes of $\eqAlmostSure$ and 
$\aut_{\eqAlmostSure}$ has a transition from $[u]_{\eqAlmostSure}$ to $[v]_{\eqAlmostSure}$ over $a$  if and only if 
$[ua]_{\eqAlmostSure} = [v]_{\eqAlmostSure}$. Observe that due to Theorem~\ref{f:ExpectedLIMAVG}, if $[u]_{\eqAlmostSure}$ and $[v]_{\eqAlmostSure}$ are in a bottom SCC, then
$[u]_{\eqAlmostSure} = [v]_{\eqAlmostSure}$.
Then, for every bottom SCCs $[u]_{\eqAlmostSure}$ we assign the value of all outgoing transitions (which are self-loops) to 
$\expected(\lang \mid u\Sigma^{\omega})$. For the remaining transitions we set the weights to $0$ (due to Theorem~\ref{f:ExpectedLIMAVG} these weights are irrelevant as they do not change any of the expected values).
Observe that $\aut_{\eqAlmostSure}$ computes $\lang$.

A classical result of~\cite{angluin1987learning} states that DFA can be learned in polynomial time using membership and equivalence queries. We adapt this result here.
The learning algorithm for $\flimavg$-automata maintains a pair $(\Pre,\Suff)$, where 
	$\Pre$, the set of access words, contains different representatives the right congruence relation, and
	$\Suff$, the set of test words, contains words that approximate the right congruence relation.
	 $\Suff$ defines the relation $\equiv_{\Suff}$ such that
$u_1 \equiv_{\Suff} u_2$ if and only if  
for all $v \in \Suff$ we have
 \( \prob (\set{w \mid  \lang(u_1vw) \neq \lang(u_2vw)}) = 0
\).

The algorithm maintains two properties: 
\emph{separability}: all different words $u_1, u_2 \in \Pre$ belong to different equivalence classes of $\equiv_{\Suff}$, and
\emph{closedness}; for all $u_1 \in \Pre$ and $a \in \Sigma$ there is $u_2 \in \Pre$ with $u_1 a \equiv_{\Suff} u_2$.
Each separable and closed pair $(\Pre,\Suff)$ defines a $\flimavg$-automaton $\aut_{\Pre, \Suff}$, which can be tested for $0$-consistency against the teacher's function.
If the teacher provides a counterexample $u$, then it is used to 
extend $\Pre$ and $\Suff$. 
To do so, we split $u$ into $v_1a, v_2$ such that $v_1a$ is the minimal prefix of $u$ such that $\expected(\lang \mid v_1\Sigma^{\omega}) \neq \expected(\aut_{\Pre, \Suff} \mid u\Sigma^{\omega})$. 
Let $v_1'$ be a word from $\Pre$ that is $\equiv_{\Suff}$-equivalent to $v_1$.
We take $\Pre' = \Pre \cup \set{v_1'a}, \Suff' = \Suff \cup \set{v_2}$ and close $(\Pre', \Suff')$ using expectation queries and test the equivalence again. 
We repeat this  until we get a $\flimavg$-automaton defining almost equivalent to $\lang$.

The proof of correctness of this algorithm is a straightforward modification of the proof of correctness of the algorithm from \cite{angluin1987learning}, and thus it is omitted.
\end{proof}

\subsection{Non-uniform distributions}
\label{sec:non-uniform}
So far we only discussed the uniform distribution of words. 
Here we briefly discuss whether Theorem~\ref{th:almostSure} can be generalized to arbitrary distributions, represented by Markov chains. 
We assume that the Markov chain is given to a learning algorithm in the input.
\todo{We assume that the Markov chain is given as an algorithm input; if not, it can be learned from the sample using standard techniques.
Cite the results. (Janek: I don't know such results.)}

\Paragraph{Markov chains} A (finite-state discrete-time) \emph{Markov chain} is a tuple $\tuple{\Sigma,S,s_0,E}$, 
where $\Sigma$ is the alphabet of letters, 
$S$ is a finite set of states, $s_0$ is an initial state, 
$E \colon S \times \Sigma \times S \mapsto [0,1]$ is an edge probability function, which  
for every $s \in S$ satisfies $\sum_{a \in \Sigma, s' \in S} E(s,a,s') = 1$. 

The probability of a finite word $u$ w.r.t. a Markov chain $\markov$, denoted by $\prob_{\markov}(u)$,  is the sum of probabilities of paths from $s_0$ labeled by $u$, 
where the probability of a path is the product of probabilities of its edges.
For basic open sets $u\cdot \Sigma^\omega = \{ uw \mid w \in \Sigma^{\omega} \}$, 
we have $\prob_{\markov}(u\cdot \Sigma^\omega)=\prob_{\markov}(u)$, and then the 
probability measure over infinite words defined by $\markov$ is the unique
extension of the above measure~\cite{feller}.
We will denote the unique probability measure defined by $\markov$ as $\prob_{\markov}$. 

Observe that the uniform distribution can be expressed with a (single-state) Markov chain and hence all the lower bounds from 
Section~\ref{sec:passive} and Section~\ref{sec:active} still hold. 

\Paragraph{Non-vanishing Markov chains} A Markov chain $\markov$ is \emph{non-vanishing} if for all words $u \in \Sigma^*$ we have $\prob_{\markov}(u\Sigma^*) > 0$.
The almost-exact active learning over distributions given by non-vanishing Markov chains can be solved in polynomial time using Theorem~\ref{th:almostSure}. For every measurable set $X$, non-vanishing Markov chain $\markov$  we have $\prob_{\markov}(X) > 0$ if and only if $\prob(X) > 0$.
Thus, almost-exact learning over $\markov$ and over the uniform distribution coincide.

\Paragraph{General Markov chains} The proof for the uniform distribution does not extend to vanishing Markov chains because the relation $\eqAlmostSure$ is not a right congruence.
This cannot be simply fixed  as we show that learning cannot be done in polynomial time, assuming P $\neq$ \NP{}. We define the almost-exact minimization problem as 
an instance of the approximate minimization problem with $\epsilon =0$. Having a polynomial-time algorithm for almost-exact learning, we can solve 
the almost-exact minimization problem in polynomial time.

\begin{theorem}
The almost-exact minimization problem for $\flimavg$-automata under distributions given by Markov chains is~\NP-complete. 
\end{theorem}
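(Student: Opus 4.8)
The plan is to establish both directions separately, with the hardness exploiting precisely the phenomenon already flagged in the text: for a vanishing Markov chain the relation $\eqAlmostSure$ fails to be a right congruence, because prefixes of probability $0$ impose no constraints. For $\NP$ membership I would guess and check. Given $\aut$, a Markov chain $\markov$ and $n$, nondeterministically guess $\aut'$ with at most $n$ states. As in the earlier reductions, only the values of bottom SCCs matter, and these must coincide with the expected values of the bottom SCCs of $\aut$ reachable with positive probability; those are rationals of polynomial size, so we may restrict the guess to $\aut'$ with polynomial-size weights. It then remains to verify $\prob_{\markov}(\set{w : \valueL{\aut}(w) \neq \valueL{\aut'}(w)}) = 0$, which is done in polynomial time by forming the product of $\aut$, $\aut'$ and $\markov$, computing the bottom SCCs reachable with positive probability, and checking via Theorem~\ref{f:ExpectedLIMAVG} (applied to the product, cf.~\cite{BaierBook}) that the two values agree on each such component.

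For $\NP$-hardness I would reduce from graph $k$-colorability. Given a graph $G = (V,E)$ with $V = \set{1, \dots, p}$, take $\Sigma = \set{a_1, \dots, a_p}$; for each edge $e = \set{i, i'}$ fix a distinct coordinate $j_e$ and let $\lang_{\C}(a_\ell a_{j_e} w)$ be $0$ if $\ell = i$, $1$ if $\ell = i'$, and arbitrary otherwise. This is computed by a tree-like $\flimavg$-automaton $\aut_{\C}$ with $O(|V|+|E|)$ states. The Markov chain $\markov$ branches on the first letter into a state $s_i$ and, from $s_i$, assigns positive probability to $a_{j_e}\Sigma^{\omega}$ exactly when $i \in e$ (looping in a sink thereafter), so that $\prob_{\markov}(a_\ell a_{j_e}\Sigma^{\omega}) > 0$ iff $\ell \in e$. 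Thus $j_e$ is a ``don't care'' coordinate for every first letter outside $e$, and the only first letters that must disagree on a positively-weighted continuation are the adjacent ones.

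The correctness hinges on the following observation. On positive-probability words, two first letters $a_i, a_{i'}$ may lead to the same state of an almost-exact $\aut'$ if and only if $i$ and $i'$ are non-adjacent: an edge $e = \set{i,i'}$ forces $\valueL{\aut'}$ to equal $0$ after $a_i$ and $1$ after $a_{i'}$ on the positively-weighted set $a_{j_e}\Sigma^{\omega}$, so the two successors differ; conversely a pairwise non-adjacent set of first letters induces pairwise-consistent partial value-profiles (they can disagree only through an edge), and pairwise consistency of partial profiles yields a common total profile, realized by one merged state routing each $a_j$ to the bottom SCC of the agreed value. Hence the successors of the initial state of a minimal almost-exact $\aut'$ are exactly the color classes of a proper coloring, and the minimal approximant has $\chi(G) + c$ states, where $c$ accounts for the initial state and the two bottom SCCs of values $0$ and $1$; setting $n = k + c$ completes the reduction.

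The step I expect to be the main obstacle is the converse state-counting in the hardness proof: arguing that a minimal almost-exact approximant uses precisely $\chi(G)$ states in the successor layer, and that the induced conflict relation is \emph{exactly} $G$ (non-adjacent letters never conflict, since every coordinate they both define is shared only through an absent edge). I would handle this exactly as in the proofs of Theorem~\ref{th:approximateNPComp} and the rigid case, showing that the initial state has no self-loop and that it, together with the value-$0$ and value-$1$ SCCs, is distinct from the successor states, so that at most $k$ states remain to serve as the color representatives.
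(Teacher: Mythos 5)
Your \NP{}-membership argument is essentially the paper's (guess an automaton, check almost-equivalence w.r.t.\ $\markov$ via the product of $\aut$, $\aut'$ and $\markov$), and that part is fine. The hardness direction, however, takes a genuinely different route from the paper -- the paper reduces from the sample fitting problem for $E$-samples (Theorem~\ref{thm:samplee}), building a Markov chain supported exactly on the cylinders $u_i\Sigma^\omega$ of the sample, so that ``almost equivalent to the tree automaton under $\prob_{\markov}$'' coincides with ``consistent with the sample,'' and all state-counting issues are inherited from the already-proved \NP{}-hardness of sample fitting -- and it is precisely in your different route that there is a genuine gap.

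The gap is the converse state count, which you flag as the main obstacle but propose to settle ``exactly as in'' Theorem~\ref{th:approximateNPComp} and the rigid case. That transfer does not work for your construction. In the BKMP-based proofs, the instance is padded (the forced $\vec{0}$, $\vec{1}$ vectors and the blocks of $h > \max(2t,m,n)$ forced coordinates) specifically so that a self-loop at the initial state or a misplaced bottom SCC strictly worsens the approximation; your coloring gadget has no analogous penalty. Concretely, in your $\lang_{\C}$ the constraints on the initial state concern only vertex letters (its edge-letter successors are unconstrained because first-letter edge words have $\markov$-probability $0$), while the constraints on a color-class state concern only edge letters (second-letter vertex words have probability $0$); these letter sets are disjoint, so nothing prevents a minimal almost-exact automaton from giving the initial state a self-loop on $a_i$ and letting it double as the class state of $i$, and in degenerate orientations a class state whose profile is constantly $0$ (or $1$) can likewise double as a value state. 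Hence the minimal size is not pinned to $\chi(G)+3$; it can be $\chi(G)+2$ or smaller depending on the graph and on the edge orientation you chose, so the threshold $n = k + c$ is not well defined and the reduction as stated does not decide $k$-colorability. Fixing this requires adding padding gadgets of your own (forcing the initial state to have no useful self-loop and forcing exactly two separate value states), not merely citing the other proofs. A smaller issue: with $\Sigma = \{a_1,\ldots,a_p\}$ you cannot give each edge a distinct coordinate unless $|E|\le p$, which trivializes colorability; you need a separate letter per edge, which is an easy repair but should be stated.
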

\begin{proof}
The problem is in \NP{} as we can non-deterministically pick an automaton $\aut'$ and check in polynomial time whether 
 $\flimavg$-automata $\aut$ and $\aut'$ are almost equivalent w.r.t. a given Markov chain.

We reduce the sample fitting problem, which is \NP-complete for $E$-samples (Theorem~\ref{thm:samplee}), to the almost-exact minimization problem.
Consider a finite $E$-sample $S$ based on words $u_1, \ldots, u_k$.
Let $\markov_S$ be a Markov chain which assigns probability $\frac{1}{k}$ to $u_i \Sigma^{\omega}$, for each $i$, and $0$ to words not starting with any of $u_i$.
On each $u_i \Sigma^{\omega}$, $\markov_S$ defines the uniform distribution.
Let $\aut_S$ be a a tree-like $\flimavg$-automaton consistent with $S$. Both, $\markov_{S}$ and $\aut_S$ are of polynomial size in $S$.
Observe that every automaton $\aut$, which is consistent with $S$ (according to the uniform distribution over infinite words), is almost equivalent to $\aut_S$ (over $\prob_{\markov}$) and vice versa.
Therefore, there is an automaton of $n$ states almost equivalent to $\aut_S$ (under the distribution $\prob_{\markov}$)  if and only if the sample fitting problem with $S$ and $n$ has a solution.
\end{proof}

\bibliography{papers}

\appendix

\section*{Appendix}
\section{NP{}-completeness of approximate minimization}
\label{app:appendixA}

\MinimalizationNPComp*
\begin{proof}
The problem is contained in $\NP$ as we can non-deterministically pick an automaton $\aut'$ with $n$ states and check whether it $\epsilon$-approximates $\aut$. It can be done in polynomial time in the following way. For all bottom SCCs $C$ from $\aut$ and $D$ from $\aut'$, we compute
the probability $p_{C,D}$ of the set of infinite words $w$ such that $\aut$ reaches $C$ over $w$ and $\aut'$ reaches $D$ over $w$. 
Let $x_C$ be the expected value of $C$ in $\aut$, which is attained by almost all words that reach $C$ and $y_D$ be the corresponding value for $D$ in $\aut'$. The values $p_{C,D}, x_C, y_D$ can be computed in polynomial time~\cite{BaierBook}.
Then, the expected difference $\expected(|\valueL{\aut} - \valueL{\aut'}|)$ is the sum over all bottom SCCs $C$ from $\aut$ and $D$ from $\aut'$ of
$p_{C,D} \cdot |x_C - y_D|$. 

For $\NP$-hardness, consider the following problem.
Given a vector $\vec{v} \in \R^m$, we define $\lOneNorm{\vec{v}} = \sum_{i=1}^m |\vec{v}[i]|$.
\medskip 

\emph {Binary k-Median Problem (BKMP)}: given 
	numbers $n,m, k, t \in \N$ 
	and a set of Boolean vectors $\C = \{ \vec{v}_1, \ldots, \vec{v}_n \} \subseteq \{0,1\}^m$, 
decide whether there are vectors $\vec{u}_1, \ldots, \vec{u}_k \in \{0,1\}^m$ and a partition $\D_1, \ldots, \D_k $ of $\C$ such that 
	$\sum_{j=1}^k \sum_{\vec{v} \in \D_j} \lOneNorm{\vec{v} - \vec{u}_j} \leq t$.
\medskip

BKMP has been shown $\NP$-complete in~\cite{BKMP}. To ease the reduction, we consider a variant of BKMP.

\emph{Modified Binary k-Median Problem (RBKMP)}: 
 given numbers $n,m,h, k, t \in \N$, with $h > max(2t, m,n)$, 
 and a set of Boolean vectors $\C = \{ \vec{v}_1, \ldots, \vec{v}_n \} \subseteq \{0,1\}^{2m+2h}$, which satisfies the following conditions:
\begin{enumerate}[(C1)]
\item $\C$ contains vectors $\vec{0}$ and $\vec{1}$,
\item \label{cond:equalZero} each vector $\vec{v} \in \C \setminus \{\vec{0}, \vec{1}\}$ has as many $0$'s as $1$'s,
\item \label{cond:alignment} for every vector $\vec{v} \in \C \setminus \{\vec{0}, \vec{1}\}$ we have $\vec{v}[2m+1] = \ldots = \vec{v}[2m+h] = 1$ and 
$\vec{v}[2m+h+1] = \ldots = \vec{v}[2m+2h] = 0$.
\end{enumerate}
decide whether there are 
vectors $\vec{u}_1, \ldots, \vec{u}_k \in \{0,1\}^{2m+2h}$, including $\vec{0}$ and $\vec{1}$ , and a partition $\D_1, \ldots, \D_k $ of $\C$ such that 
$\sum_{j=1}^k \sum_{\vec{v} \in \D_j} \lOneNorm{\vec{v} - \vec{u}_j} \leq t$.

We reduce BKMP to Modified BKMP.
Consider an instance $(\C, k,t)$ of BKMP. 
For every vector $\vec{v} \in \{0,1\}^m$ we define $\vec{v}^* \in \{0,1\}^{2m+2h}$ such that
\begin{enumerate}
\item $\vec{v}^*[1,m] = \vec{v}$, 
\item $\vec{v}^*[m+1,2m] = \vec{1} - \vec{v}$, 
\item $\vec{v}^*[2m+1,2m+h] = 1$ 
\item $\vec{v}^*[2m+h+1, 2m+2h] = 0$.
\end{enumerate}
Then, we define $\C^* = \{ \vec{v}^* \mid \vec{v} \in \C \} \cup \set{\vec{0},\vec{1}}$ and an instance $(\C^*,k+2, 2t)$
of BKMP. Observe that if $(\C, k,t)$ has a solution, then $(\C^*,k+2, 2t)$ has a solution as well, which 
satisfies additional assumptions as above.
Conversely, 
if $(\C^*,k+2, 2t)$ has a solution, then this solution contains 
$\vec{0}$ and $\vec{1}$ because for all $\vec{v}^* \in \C^*$ we have $\lOneNorm{\vec{v}^* - \vec{0}} = 
\lOneNorm{\vec{v}^* - \vec{1}} = m + h > 2t$. It also follows that in the partition $\D_1, \ldots, \D_{k+2}$
two sets that correspond to $\vec{0}$ and $\vec{1}$ respectively are singletons.
Therefore, the solution for $(\C^* \cup \{ \vec{0},\vec{1} \},k+2, 2t)$ can be transformed to a solution of $(\C, k,t)$.

Let $\C = \{ \vec{v}_1, \ldots, \vec{v}_n\} \subseteq \{0,1\}^{2m+2h}$. 
Let $M = 2m+2h$, $N = \frac{M-n}{2}$ ($M = n  + 2N$) and assume that $n$ is even.
We know that $h > t$ and hence $N > 2t$.

We define $\lang_{\C}$ over the alphabet $\Sigma = \{a_1, \ldots, a_{M}\}$ as follows:
for all $a_i, a_j \in \Sigma$, 
$w \in \Sigma^\omega$ 
\begin{itemize}
\item if $i \leq n$, we set $\lang_{\C}(a_ia_j w) = \vec{v}_i[j]$. 
\item if $i \in \{n+1,  n+N\}$, we set $\lang_{\C}(a_ia_j w) = 0$. 
\item if $i \in \{n+N, M\}$, we set $\lang_{\C}(a_ia_j w) = 1$. 
\end{itemize}
This language can be defined with a tree-like $\flimavg$-automaton $\aut_{\C}$. In this automaton states reached over $a_1, \ldots, a_{n}$, i.e.,
the states $\delta(q_0, a_1), \ldots, \delta(q_0, a_n)$ correspond to vectors $\vec{v}_1, \ldots, \vec{v}_n$ via 
$\expected(\aut_{\C} \mid a_i a_j) = \vec{v}_i[j]$.
The vectors $\vec{0}$ and $\vec{1}$ are repeated $N$ times.

First, consider a solution $\vec{u}_1, \ldots, \vec{u}_k$ and $\D_1, \ldots, D_k$ for $(\C, k, t)$, and
$\vec{u}_{1} = \vec{0}$, $\vec{u}_2 = \vec{1}$.
We construct an automaton $\aut_{\D}$ with an initial state $q_0$ and $k$ states $q_1, \ldots, q_k$
First, for all $i,j \in \set{1,\ldots, M}$ we set $\delta(q_0, a_i) = q_j$  if $\vec{v}_j \in \D_i$, i.e., states $q_1, \ldots, q_k$ correspond to vectors $\vec{u}_1, \ldots, \vec{u}_k$. 
Next, $q_1, q_2$ correspond to vectors $\vec{0}$ and $\vec{1}$ respectively and hence 
for all $a \in \Sigma$ we set $\delta(q_1, a_i) =  q_1$ with the weight $0$ and $\delta(q_2, a_i)=q_2$ with the weights $1$.
Finally, for all $i>2$ and all $a_j \in \Sigma$, we define 
	$\delta(q_i, a_j) = q_1$ if $\vec{u}_i[j] = 0$ and 
	$\delta(q_i, a_j) = q_2$ otherwise.
Observe that $\sum_{j=1}^k \sum_{\vec{v} \in \D_j} \lOneNorm{\vec{v} - \vec{u}_j} \leq t$ implies that $\aut_{\D}$ is 
$\frac{t}{nM}$-approximation of $\lang_{\C}$.

Conversely, assume that $\aut'$ is the optimal approximation of $\lang_{\C}$ among automata with $k+1$ states, i.e.,
$\aut'$ $\epsilon$-approximates $\lang_{\C}$, where $\epsilon \leq \frac{t}{nM}$, and no automaton with at most $k+1$ states
approximates $\lang_{\C}$ with smaller $\epsilon$.

Consider vectors $\vec{p}_1, \ldots, \vec{p}_{M} \in \R^{M}$ defined as follows:
for all $i,j \in \set{1,\ldots, M}$ we put 
$\vec{p}_i[j] = \expected(\aut' \mid a_ia_j \Sigma^{\omega})$. 
Since $\aut'$ $\frac{t}{nM}$-approximates $\lang_{\C}$ we know that 
\[ \sum_{i=1}^n \lOneNorm{\vec{p}_i - \vec{v_i}}
+ \sum_{i=n+1}^{n+N} \lOneNorm{\vec{p}_i - \vec{0}} 
+ \sum_{i=n+N+1}^{M} \lOneNorm{\vec{p}_i - \vec{1}} \leq  \frac{t M}{n}
\tag{*}
\label{eq:total-sum}
\] 

We need to show that there are at most $k$ different vectors $\vec{p}_i$ and among these vectors
there are $\vec{0}$ and $\vec{1}$. 

First, observe that all states of $\aut'$ are reachable from the initial state with some word of length at most $2$. Indeed, we can change states, which are in distance $2$ from the initial state, 
to bottom SCCs with the values being the expected value from the old state. 
Second, the initial state cannot be a bottom SCC and hence all bottom SCCs are in distance $1$ or $2$ from the initial state. We can change values of all bottom SCCs to $0$ or $1$.
To see that observe that we can take values of all bottom SCC and consider these values as variables $x_1, \ldots, x_l$. Then, these values appear 
as convex combinations in \eqref{eq:total-sum}. However, since all constants in \eqref{eq:total-sum} are $0$ and $1$, then \eqref{eq:total-sum} is minimal under the substitution mapping $x_1, \ldots, x_l$ to $\set{0,1}$.

Knowing that all bottom SCC have value $0$ and $1$, we observe that we need to have precisely two bottom SCC $s_0, s_1$ of the values $0$ and $1$.
It only improves the approximation if $s_0$ is the successor of the initial state over $a_{n+1}, \ldots, a_{n+N}$ and 
$s_1$ is the successor of the initial state over $a_{n+N+1}, \ldots, a_{n+2N}$.
Therefore, all vectors 
\[
\begin{split}
\vec{p}_{n+1} =  \ldots =  \vec{p}_{n+N} = \vec{0} \\
\vec{p}_{n+N+1} = \ldots = \vec{p}_{M} = \vec{1}
\end{split}
\tag{**}
\label{eq:ZeroOne}
\]

Now, suppose that the initial state $q_0$ of $\aut'$ has a self-loop over $a_i$. 
We know that $i \leq n$ as otherwise we have $\expected(\aut') \in \{0,1\}$ and
$\expected(\lang_{\C}) = \frac{1}{2}$. A contradiction. 
Consider vector $\vec{q}$ defined as $\vec{q}[i] = \expected(\aut' \mid a_i \Sigma^{\omega})$. 
Observe that due to \eqref{eq:ZeroOne} we have
$q[n+1], \ldots, q[n+N] = 0$ and $q[n+N+1], \ldots, q[n+2N] = 1$.
Condition \eqref{cond:alignment} of Modified BKMP implies that $\lOneNorm{\vec{q} - \vec{v}_i} \geq 2h$, while
condition \eqref{cond:equalZero} implies that $\lOneNorm{\vec{1} - \vec{v}_i} = \lOneNorm{\vec{0} - \vec{v}_i} =  \frac{M}{2}$. 
Since $h > n$, we have $2h > \frac{M}{2}$ and hence changing the self-loop $(q_0,a_i,q_0)$ to a transition  $(q_0, a_i, q')$ to the state $q'$ that corresponds to $\vec{1}$ gives us a better approximation of $\lang_{\C}$. 
A contradiction.
It follows that $q_0$ has at most $k$ successors.

Then, $\vec{p}_1, \ldots, \vec{p}_n$ give us a solution to $\C$. First, observe that we can approximate $\vec{p}_1, \ldots, \vec{p}_n$  with vectors from $\{0,1\}^{M}$ without increasing 
the left hand side of \eqref{eq:total-sum}. Hence, we assume that they belong to $\{0,1\}^{M}$. Second,
we select different vectors among  $\vec{p}_1, \ldots, \vec{p}_n$  as $\vec{u}_1, \ldots, \vec{u}_k$ and
define $\D_i = \{ \vec{v}_j \mid \vec{p}_j = \vec{u}_i \}$. The,  \eqref{eq:total-sum} implies that this is a solution to $(\C,k,t)$.
\end{proof}

\section{NP{}-completeness of rigid approximate minimization}
\MinimalRigidNPComp*
\begin{proof}[Proof sketch]
The problem is in $\NP$. 
We can non-deterministically pick an automaton $\aut'$ with $n$ states and check whether it is a rigid $\epsilon$-approximation of ${\aut}$.
Observe that having two automata $\aut_1,\aut_2$, we can check whether $\aut_1$ is a rigid $\epsilon$-approximation of ${\aut}_2$ in polynomial time.
Indeed, let $P$ be a set of pairs of states defined as follows: $(q,s) \in P$ if and only if $q$ is a state of $\aut_1$, $s$ is a state of $\aut_2$
and they are both reached in the respective automata from the (respective) initial states over a common word $u$. 
Observe that
$\aut_1$ is a rigid $\epsilon$-approximation of ${\aut}_2$ if and only if for every pair $(q,s)$
the automaton $\aut_1$ starting from $q$ $\epsilon$-approximates $\aut_2$ starting from the state $s$. 
We have shown in the proof of Theorem~\ref{th:approximateNPComp} (in Appendix~\ref{app:appendixA}) that we can decide $\epsilon$-approximation in polynomial time.
Therefore, we can decide rigid $\epsilon$-approximation in polynomial time as well.
\medskip 

We show that the problem is $\NP$-hard.
To ease the presentation, we define the following $\frac{1}{4}$-vector cover problem, which is an intermediate step in our reduction. 
\medskip

\noindent\emph{The $\frac{1}{4}$-vector cover problem}: 
Given $n,k \in \N$ and vectors $\C = \set{\vec{v}_1, \ldots, \vec{v}_n } \subseteq \{0, \frac{1}{2}, 1\}^n$, decide
whether there exist $\vec{u}_1, \ldots, \vec{u}_k \in \R^{n}$ such that for every vector $\vec{v} \in \C$ there is $j$ such that 
$\lInfNorm{\vec{v} - \vec{u}_j} \leq \frac{1}{4}$.
\medskip

The $\frac{1}{4}$-vector cover problem is related to BKMP presented in the proof of Theorem~\ref{th:approximateNPComp}.
We show two reductions, which together show $\NP$-hardness.
\medskip

\noindent \emph{The $\frac{1}{4}$-vector cover problem reduces to the rigid approximate minimization problem}.
Let  $\C = \set{ \vec{v}_1, \ldots, \vec{v}_n } \subseteq \{0, \frac{1}{2}, 1\}^n$.
We define $\lang_{\C}$ over the alphabet $\Sigma = \{a_1, \ldots, a_{m}\}$ such that for all $a_i, a_j \in \Sigma$, 
$w \in \Sigma^\omega$ we have $\lang_{\C}(a_ia_j w) = \vec{v}_i[j]$. Such $\lang_{\C}$ can be defined with a 
a tree-like $\flimavg$-automaton $\aut_{\C}$ defined as follows.
It has three single-state bottom SCCs: $p_0, p_{0.5}, p_1$ with the expected values $0, 0.5$ and $1$ respectively.
The automaton $\aut_{\C}$ moves over letters $a_i$ to different states. 
Then, over any two-letter word $a_i a_j$ the automaton $\aut_{\C}$ ends up in a single-state bottom SCC of the value $\vec{v}_i[j]$. 
Therefore, this automaton has $n+4$ states: the initial state $q_0$, $n$ different successors $s_1, \ldots, s_n$ of $q_0$ and the states $p_0, p_{0.5}, p_1$.

Let $\epsilon = \frac{1}{4}$. We show that an instance $n,k,\C$ 
of the $\frac{1}{4}$-vector cover problem has a solution  if and only if $\aut_{\C}$ has a rigid $\epsilon$-approximation with $k+3$ states.

First, assume that there is $\aut'$ has at most $k+3$ states and 
for all words $u \in \Sigma^*$ we have $\expected(|\valueL{\aut_{\C}} - \valueL{\aut'}| \mid u\Sigma^{\omega}) \leq \epsilon$.
Observe that the initial state of $\aut'$ has at most $k$ different successors.
To see that consider functions $\lang^{u}(w) = \valueL{\aut_{\C}}(u w)$ defined for $u \in \Sigma^*$. 
If for some $u_1, u_2, u \in \Sigma^*$ we have $\expected(|\lang^{u_1 u} - \lang^{u_2 u}|) > 2 \epsilon$, then words $u_1$ and $u_2$ lead (from the initial state) to different states in $\aut'$.
Using this observation, we can show that $\aut'$ has at least $3$ states that cannot be successors of the initial state: the initial state and
(at least two) states that corresponding to bottom SCCs.

Finally, we define vectors $\vec{u}_1, \ldots, \vec{u}_k$ from the successors of the initial state of $\aut'$. Formally,
for $i \in \{1, \ldots, n\}$ we define a vector $\vec{y}_i$ as $\vec{y}_i[j] = \expected(\valueL{\aut'} \mid a_ia_j \Sigma^{\omega})$. 
Note that there are at most $k$ different vectors among $\vec{y}_1, \ldots, \vec{y}_n$ and we take these distinct vectors as $\vec{u}_1, \ldots, \vec{u}_k$.
The condition $\expected(|\valueL{\aut_{\C}} - \valueL{\aut'}| \mid u\Sigma^{\omega}) \leq \epsilon$ implies that 
$\vec{u}_1, \ldots, \vec{u}_k$ satisfy the $\frac{1}{4}$-vector cover problem.

Conversely, assume that the instance $n,k,\C$ has a solution. Observe that w.l.o.g we can assume that
the solution vectors $\vec{u}_1, \ldots, \vec{u}_k$ belong to $\set{0.25,0.75}^n$.
Based on this solution we define an automaton with $k+3$ states such that 
the successors of the initial state correspond to vectors $\vec{u}_1, \ldots, \vec{u}_k$ and there are two bottom SCC of the values $0.25$ and $0.75$.

Formally, we define $\aut'$ as follows.
Let $f \colon \set{1,\ldots, n} \to \set{1,\ldots, k}$ be the mapping of vectors $\vec{v}_i \in \C$ to $\vec{u}_j$ such that $\lInfNorm{\vec{v}_i - \vec{u}_j} \leq \frac{1}{4}$  (and then $f(i) = j$).
We define $\aut'$ with states $q_0, q_1, \ldots, q_k, s_1, s_2$ such that 
$q_0$ is the initial state, 
$q_1, \ldots, q_k$ are the successors of the initial state, and 
$s_1, s_2$ are single-state SCCs of the values $0.25$ and $0.75$ respectively.
We define the transition function as follows.
For all $a_i \in \Sigma$ we set $\delta(q_0, a_i) = q_{f(i)}$.
Then for every $i,j \in \set{1, \ldots, n}$ we set $\delta(q_i, a_j) = s_1$ if $\vec{u}_i[j] = 0.25$ and 
$\delta(q_i, a_j) = s_2$ otherwise ($\vec{u}_i[j] = 0.25$).
Now, the fact that  $\vec{u}_1, \ldots, \vec{u}_k$ is the solution to the $\frac{1}{4}$-vector cover problem implies that
$\aut'$ is a rigid $\frac{1}{4}$-approximation of ${\aut_{\C}}$.
\medskip

\noindent \emph{The dominating set problem reduces to the $\frac{1}{4}$-vector cover problem}.
Consider a graph $G = (V,E)$ with $V = \{ b_1, \ldots, b_n \}$ and $k \in \N$. 
We assume that $G$ has no cycles of length less than $5$; this restriction does not influence $\NP$-hardness of the problem, since each edge can be substituted with a path of odd length greater than $5$.
Denote by $\neibr_k(b)$ the set of all nodes of $G$ connected to $b$ with a path of length at most $k$;  $b$ is connected with itself with a path of length $0$ and hence 
$b \in \neibr_k(b)$.
We define vectors $\vec{v}_1, \ldots, \vec{v}_n \in \{0, \frac{1}{2}, 1\}^n$ as follows.
For $i,j \in \{1,\ldots, n\}$ we set
\begin{itemize}
\item  $\vec{v}_j[i] = 1$ if $i=j$,
\item  $\vec{v}_j[i] = 0.5$, if $i \neq j$, but $b_i \in \neibr_2(b_j)$, and 
\item $\vec{v}_j[i] = 0$ otherwise.
\end{itemize}
We claim that there exist  $\vec{u}_1, \ldots, \vec{u}_k \in \R^{n}$ as in the problem statement if and only if $G$ has a dominating set of the size $k$.
\smallskip

Assume that $G$ has a dominating set $d_1, \ldots d_k$. 
Consider $\vec{u}_1, \ldots, \vec{u}_k$ such that for all $i \in \set{1,\ldots,k}$ we have
\begin{itemize}
\item $\vec{u}_i[j] = \frac{3}{4}$ if $i = j$ or  $(d_i,b_j) \in E$, and
\item $\vec{u}_i[j] = \frac{1}{4}$ otherwise. 
\end{itemize}
Observe that for all 
$i \in \set{1,\ldots,k}$ and $j \in \set{1,\ldots,n}$, if $i$ satisfies  $i = j$ or $(b_j,d_i) \in E$, then $\lInfNorm{\vec{v}_j - \vec{u}_i} \leq \frac{1}{4}$. 
Therefore, the vectors $\vec{u}_1, \ldots, \vec{u}_k$ solve the $\frac{1}{4}$-vector cover problem.

Conversely, assume that there exist $\vec{u}_1, \ldots, \vec{u}_k \in \R^{n}$ that solve the $\frac{1}{4}$-vector cover problem. 
Let $\vec{u}_j$ be a vector such that for $\vec{v}_{m[1]}, \ldots, \vec{v}_{m[l]}$ we have $\lInfNorm{\vec{u}_j - \vec{v}_{m[i]}} \leq \frac{1}{4}$.
Note that we can assume that all coefficients of $\vec{u}_j$ are $\frac{1}{4}$ and $\frac{3}{4}$.
We claim that for some $i$ nodes $b_{m[1]}, \ldots, b_{m[l]}$ of $G$ belong to $\neibr_1(b_{m[i]})$ for some $i$. 
To see that observe that  the distance between any two nodes among $b_{m[1]}, \ldots, b_{m[l]}$ is at most $2$.
Indeed, for all $k$ the component $m[k]$ of $\vec{v}_{m[k]}$ is $1$ and hence this component of $\vec{u}_j$ is $\frac{3}{4}$.
That in turn implies that the component $m[k]$ of $\vec{v}_{m[1]}, \ldots, \vec{v}_{m[l]}$  is $1$ or $\frac{1}{2}$ and hence 
$b_{m[k]}$ belongs to $\neibr_2(b_{m[1]}), \ldots, \neibr_2(b_{m[l]})$.
Since there are no short cycles in $G$ and all distances are bounded by $2$, there has to $i$ such that 
$b_{m[1]}, \ldots, b_{m[l]}$ of $G$ belong to $\neibr_1(b_{m[i]})$. 
Then, we define $d_j$ as $b_i$.

Note that $d_j$ dominates all nodes $b_{m[1]}, \ldots, b_{m[k]}$, which correspond to vectors  $\vec{v}_{m[1]}, \ldots, \vec{v}_{m[l]}$.
Therefore, nodes $d_1, \ldots, d_k$ picked as above form a dominating set in $G$.
\end{proof}

\section{Estimating minimal sample size in passive learning}\label{app:samplesize}

The probability that a single words of length $l$ is not generated by a random sample $S$ with $s$ examples, generated w.r.t. a distribution $\uniformFin{\lambda}$, can be bounded by
\[
\left(1-\frac{1}{|\Sigma|^{l}(1-\lambda)^l\lambda}\right)^s
\]

We want to compute a sample size $s$ such that the probability that there is a word of size $l$ not in this sample is at most $\epsilon$. This can be, very roughly, represented by the following inequality:
\[
|\Sigma|^l \left(1-\frac{1}{|\Sigma|^{l}(1-\lambda)^l\lambda}\right)^s < \epsilon
\]
which we can conveniently rewrite as 
\[
\left(1-\frac{1}{|\Sigma|^{l}(1-\lambda)^l\lambda}\right)^{|\Sigma|^{l}(1-\lambda)^l\lambda \cdot s/(|\Sigma|^{l}(1-\lambda)^l\lambda)} < 
\frac{\epsilon}{|\Sigma|^l}
\]

By the fact that $(1-\frac{1}{x})^x < e^{-1} $, the above equality is a consequence of the following one 
\[
e^{-s/(|\Sigma|^{l}(1-\lambda)^l\lambda)} < \frac{\epsilon}{|\Sigma|^l}
\]
which is equivalent to 
\[
e^{s/(|\Sigma|^{l}(1-\lambda)^l\lambda)} > \frac{|\Sigma|^l}{\epsilon}
\]
Now we apply the natural logarithm.
\[
{s/(|\Sigma|^{l}(1-\lambda)^l\lambda)} > \ln \frac{|\Sigma|^l}{\epsilon}
\]
so
\[
s > |\Sigma|^{l}(1-\lambda)^l\lambda  \cdot \ln \frac{|\Sigma|^l}{\epsilon}
\]

For $(E, n)$-samples, the estimation is similar.

\end{document}